\theoremstyle{definition}
\newtheorem{example}{Example}
\theoremstyle{plain}
\newtheorem{assumption}{Assumption}
\newtheorem{theorem}{Theorem}
\newtheorem{proposition}{Proposition}
\newtheorem{corollary}{Corollary}
\begin{document}

\title{{\Large Extreme Quantile Treatment Effects under Endogeneity: Evaluating Policy Effects for the Most Vulnerable Individuals}}
\author{ Yuya Sasaki\thanks{\setlength{\baselineskip}{4.4mm}
Brian and Charlotte Grove Chair and Professor of Economics, Vanderbilt University. Email:
yuya.sasaki@vanderbilt.edu.} \ and \ Yulong Wang\thanks{%
Associate Professor of Economics, Syracuse University. Email:
ywang402@syr.edu.}}
\date{}
\maketitle

\begin{abstract}
\setlength{\baselineskip}{8.25mm}
We introduce a novel method for estimating and conducting inference about \textit{extreme} quantile treatment effects (QTEs) in the presence of endogeneity. 
Our approach is applicable to a broad range of empirical research designs, including instrumental variables design and regression discontinuity design, among others. 
By leveraging regular variation and subsampling, the method ensures robust performance even in extreme tails, where data may be sparse or entirely absent.
Simulation studies confirm the theoretical robustness of our approach.
Applying our method to assess the impact of job training provided by the Job Training Partnership Act (JTPA), we find significantly negative QTEs for the lowest quantiles (i.e., the most disadvantaged individuals), contrasting with previous literature that emphasizes positive QTEs for intermediate quantiles.

{\small {\ \ \ \newline
\textbf{Keywords: } endogeneity, extreme quantile, job training, quantile treatment effect, subsampling} \newline
\textbf{JEL Code: } C21}
\end{abstract}

\newpage


\section{Introduction}

\label{sec:intro} 

The quantile treatment effect (QTE) is a widely adopted concept in empirical research for quantifying heterogeneous treatment effects. Various methods have been proposed to identify the QTE in the presence of endogeneity. Popular approaches include instrumental variables designs \citep{abadie2002instrumental,chernozhukov2005iv}, changes-in-changes designs \citep{athey2006identification}, and regression discontinuity designs \citep{frandsen2012quantile}, among others. Their identification results are often accompanied by corresponding methods for estimation and inference.

However, existing methods of estimation and inference are predominantly developed for intermediate quantiles, leaving a gap in the literature regarding estimation and inference for extreme quantiles, such as those at the very low or very high ends of the distribution. This is a notable limitation from practical perspectives, as policymakers are often particularly interested in the effectiveness of policies for extreme subpopulations, such as individuals living in extreme poverty or those with severe health conditions.

Two studies address this issue in specific contexts: \citet{zhang2018extremal} focuses on the observed unconfoundedness design; and \citet{sasaki2024extreme} focus on changes-in-changes design. However, extreme QTEs under other endogeneity designs remain unexamined. This paper aims to fill this gap by introducing a novel method for the estimation and inference of extreme QTEs across a broad class of endogeneity scenarios and research designs.

Consider situations where the distributions, $F_1$ and $F_0$, of the potential outcomes under treatment and no treatment, respectively, are identifiable. In these cases, the QTE is quantified by $F_1^{-1}(q) - F_0^{-1}(q) $ for $q \in (0,1)$, where $F_j^{-1}$ denotes the left inverse of $F_j$ for each $j \in \{0,1\}$.
Our proposed method is applicable in situations where $F_1$ and $F_0$ can be identified and estimated by $\hat{F}_1$ and $\hat{F}_0$, respectively, with $r_n(\hat{F}_j(\cdot) - F_j(\cdot)), \, j \in \{0,1\} $, weakly converging to a Gaussian process at some rate $r_n$. All the aforementioned examples, along with many others, satisfy this requirement, making our proposed method applicable to a broad class of designs.

Instead of directly using $\hat{F}_1^{-1}(q) - \hat{F}_0^{-1}(q)$ as the estimator, we first estimate the tail indices of $F_1$ and $F_0$, and then obtain estimates of the QTE through Pareto approximation, which is valid under the regular variation condition. 
This approach enables us to robustly estimate the QTE even in regions where there are few or even no observations in the tails. 
Subsequently, we propose a method of subsampling inference for the extreme QTE based on this estimation strategy.

We demonstrate the applicability of our method through two leading examples under endogeneity: the instrumental variables design \citep{abadie2003semiparametric} and the regression discontinuity design \citep{frandsen2012quantile}. Our simulation studies reveal that the proposed method performs robustly in both scenarios.

We present an empirical application of the proposed method to evaluate the effects of job training provided by the Job Training Partnership Act (JTPA). Previous literature has highlighted the positive QTE of this program in the intermediate quantiles. However, our study of the extremely low quantiles reveals rather negative treatment effects, which are statistically significant. This finding suggests that the program may have adverse outcomes for the most disadvantaged individuals.

\bigskip\noindent 
\textbf{Literature:} 
There is a substantial body of literature in statistics and econometrics addressing extreme quantiles -- see the book by \citet{de2006extreme}.
\citet{chernozhukov2005extremal} proposes methods for estimation and inference in quantile regressions, focusing on the tails of the outcome distribution. 
\citet{chernozhukov2011inference} introduce a subsampling approach for inference in these settings.
For quantile treatment effects (QTEs), \citet{d2018extremal} consider a sample selection model, exploiting tail independence in selection. 
\citet{zhang2018extremal} analyzes estimation and inference for extreme QTEs within a broad class of observational data characterized by observational unconfoundedness.
To our knowledge, only one paper examines extreme QTEs under endogeneity: \citet{sasaki2024extreme} investigate this within the changes-in-changes (CIC) framework. Their method capitalizes on the unique features of the identifying formula in the CIC, but it does not generalize to other models with endogeneity, such as those using instrumental variables or regression discontinuity designs, which are the focus of this paper.
See \citet{chernozhukov2017extremal} for a survey.



\section{The Main Theoretical Result}

\label{sec:main} 

Let $Y_0$ and $Y_1$ denote the potential outcomes without and with treatment, respectively. 
Let $F_{j}\left( \cdot \right)$ denote a suitable (conditional) CDF of $Y_{j}$ of interest for each $j \in \{0,1\}$. 
Suppose that $\beta_j\left( \cdot \right)$ identifies $F_j\left( \cdot \right)$, i.e., 
\begin{equation}  \label{eq:beta}
F_{j}\left( \cdot \right) = \beta_{j}\left( \cdot \right).
\end{equation}
In Examples \ref{ex:abadie} and \ref{ex:frandsen} below, we provide a couple of examples of $\beta_j$ drawn from the literature on identification. 
For each of these examples, there is a plug-in analog estimator $\hat\beta_j$ of $\beta_j$.

For intermediate quantiles $q$, the $q$-th quantile treatment effect, 
\begin{equation*}
F_{1}^{-1}\left( q \right) -F_{0}^{-1}\left( q \right), 
\end{equation*}
can be estimated by the sample counterpart $\hat\beta_{1}^{-1}\left( q \right) - \hat\beta_{0}^{-1}\left( q \right), $ where $\hat\beta_{j}^{-1}(\cdot)$ denotes the left inverse of $\hat\beta_{j}(\cdot)$. 
On the other hand, for extreme quantiles where $ q $ is close to zero or one, this na\"ive estimator behaves poorly.

Our goal is estimate the extreme quantile $F_{j}^{-1}\left( q \right)$ for $ q $ close to zero or one. 
Without loss of generality, we consider the right tail, that is, $q \rightarrow 1$. 
The extreme quantiles at the left tail can be analyzed by symmetric arguments.

By imposing regularly varying tail on $F_{j}$, we show that we can robustly estimate the extreme quantiles $F_{j}^{-1}\left(q \right)$. 
Consequently, the extreme QTE is estimated precisely. 
To this goal, we make the following assumption on the estimator $\hat\beta_{j}(\cdot)$ of $\beta_{j}\left( \cdot \right) = F_{j}\left( \cdot \right).$

\begin{assumption}
\label{a:beta_hat} For each $j \in \{0,1\}$, there exist sequences of constants $r_{n} \rightarrow \infty $ as $n \rightarrow \infty$ such that 
\begin{align*}
&r_{n} \left(\begin{array}{c}  \hat{\beta}_{1}\left( \cdot \right) -\beta_{1}\left( \cdot \right)   \\ \hat{\beta}_{0}\left( \cdot \right) -\beta_{0}\left( \cdot \right)  \end{array}\right) \Rightarrow \left(\begin{array}{c} Z_{1}\left( \cdot \right)  \\ Z_{0}\left( \cdot \right) \end{array}\right) ,
\end{align*}
for some Gaussian processes $(Z_{1}\left( \cdot \right), Z_{0}\left( \cdot \right))$ with covariance function 
\begin{equation*}
\Xi \left( \cdot, \cdot \right) =  \left(\begin{array}{cc} \Xi_{1}\left( \cdot, \cdot \right) \text{ , }  \Xi_{10} \left( \cdot, \cdot \right)     \\  \Xi_{10}\left( \cdot, \cdot \right)  \text{ , }   \Xi_{0} \left( \cdot, \cdot \right)       \end{array}\right) .
\end{equation*}
\end{assumption}

Assumption \ref{a:beta_hat} can be and has been shown in the literature to be satisfied under a variety of settings, as illustrated in the two examples below.

\begin{example}[Binary Instrument and Binary Treatment; Abadie, 2003] \label{ex:abadie} 
Suppose that an econometrician observes $n$ copies of $(Y,D,Z,X^{\prime })$, where $Y$ is an outcome, $D$ is a binary treatment indicator, $Z$ is a binary instrument, and $X$ is a vector of covariates.
Let $p(X) = P(Z=1|X)$. The conditional distribution $F_0(y) = E[1\{Y_0 \le y\} | D_1 > D_0]$ of the potential outcome $Y_0$ given compliers $D_1 > D_0$ is identified by 
\begin{equation*}
\beta_0(y) = \frac{ E\left[ 1\{ Y<y \} \cdot \frac{(1-D)(p(X)-Z)}{(1-p(X))p(X)}\right] }{ E\left[1 - \frac{D(1-Z)}{1-p(X)} - \frac{(1-D)Z}{p(X)}\right] }. 
\end{equation*}
Similarly, the conditional distribution $F_1(y) = E[1\{Y_1 \le y\} | D_1 > D_0]$ of the potential outcome $Y_1$ given compliers $D_1 > D_0$ is identified by 
\begin{equation*}
\beta_1(y) = \frac{ E\left[ 1\{ Y<y \} \cdot \frac{D(Z-p(X))}{(1-p(X))p(X)} \right] }{ E\left[1 - \frac{D(1-Z)}{1-p(X)} - \frac{(1-D)Z}{p(X)}\right] } 
\end{equation*}
If we consider a parametric estimator $\widehat p(X)$ for $p(X)$, then sample-counterpart estimator $\widehat\beta_j(\cdot)$ with the plug-in of $ \widehat p(X)$ satisfies 
\begin{equation}\label{eq:weak_abadie}
\sqrt{n}\left(\widehat\beta_j( \cdot ) - \beta_j( \cdot )\right) \Rightarrow Z_j(\cdot),
\end{equation}
$j \in \{0,1\}$. Appendix \ref{sec:abadie_weak_convergence} formally establishes conditions under which the weak convergence \eqref{eq:weak_abadie} holds.
\qed
\end{example}

\begin{example}[Regression Discontinuity Design; Frandsen, Frolich, and Melly, 2012] \label{ex:frandsen} 
Suppose that an econometrician observes $n$ copies of $(Y,D,R)$, where $Y$ is an outcome, $D$ is a binary treatment indicator, and $R$ is a running variable. 
The conditional distribution $F_0(y) = E[1\{Y_0 \le y\} | D_1 > D_0, R=0]$ of the potential outcome $Y_0$ given local compliers $D_1 > D_0$ at $R=0$ is identified by 
\begin{equation*}
\beta_0(y) = \frac{\lim_{r \downarrow 0}E[1\{Y \le y\}(1-D)|R=r] - \lim_{r \uparrow 0}E[1\{Y \le y\}(1-D)|R=r]}{\lim_{r \downarrow 0}E[(1-D)|R=r] - \lim_{r \uparrow 0}E[(1-D)|R=r]}. 
\end{equation*}
Similarly, the conditional distribution $F_0(y) = E[1\{Y_0 \le y\} | D_1 > D_0, R=0]$ of the potential outcome $Y_1$ given local compliers $D_1 > D_0$ at $R=0$ is identified by 
\begin{equation*}
\beta_1(y) = \frac{\lim_{r \downarrow 0}E[1\{Y \le y\}D|R=r] - \lim_{r \uparrow 0}E[1\{Y \le y\}D|R=r]}{\lim_{r \downarrow 0}E[D|R=r] - \lim_{r \uparrow 0}E[D|R=r]}. 
\end{equation*}
\citet[][Theorem 2]{frandsen2012quantile} also propose sample-counterpart estimator $\hat\beta_j$ and show 
\begin{equation}\label{eq:weak_frandsen}
\sqrt{nh}\left(\widehat\beta_j( \cdot ) - \beta_j( \cdot )\right) \Rightarrow Z_j(\cdot),
\end{equation}
$j \in \{0,1\}$. \citet[][Theorem 2]{frandsen2012quantile} formally establish conditions under which the weak convergence \eqref{eq:weak_frandsen} holds.
\qed
\end{example}

Given our focus on the tail, we also need the following additional condition to regularize the tail of the CDF estimator $\hat{\beta}_j(\cdot)$. 

\begin{assumption}
\label{a:xi} 
For each $j \in \{0,1\}$, there exist sequences of constants $ \tilde{r}_n \rightarrow \infty $ as $n \rightarrow \infty$ such that 
\begin{equation*}
\tilde{r}_{n}  \sup_{y \in \mathbb{R}} \left\vert \hat{\beta}_{j}\left( y \right) -\beta _{j}\left( y \right) \right\vert  \Xi_{j}(y,y)^{-1/2} = O_p(1),
\end{equation*}
where the covariance function $\Xi _{j}\left( \cdot, \cdot \right)$ is defined as in Assumption \ref{a:beta_hat}.
\end{assumption}

Assumption \ref{a:xi} is arguably mild. 
Compared with Assumption \ref{a:beta_hat}, it essentially assumes that the normalized t-statistic is uniformly bounded in probability. 
We present this assumption in its current form for generality, and primitive conditions can be derived case-by-case. 

We now move on to learn about the tail features of $F_{j}\left( \cdot \right) $. 
To this end, we assume that $F_{j}\left( \cdot \right) $ is regularly varying (RV) at infinity.
We first provide an intuitive illustration to be followed by a formal theory.
Specifically, suppose that $F_j(\cdot)$ satisfies 
\begin{equation}
\lim_{t\rightarrow \infty }\frac{1-F_{j}\left( yt\right) }{1-F_{j}\left( t\right) }=y^{-\alpha _{j}}  \label{eq:RV}
\end{equation}%
for any $y>0$, where the constant $\alpha _{j}$ is called the Pareto exponent, which characterizes the tail heaviness of $F_{j}\left( \cdot \right) $. 
The RV condition allows for the approximation 
\begin{equation}
\frac{1-F_{j}\left( y\right) }{1-F_{j}\left( y_{\min,j}\right) }\approx \left( \frac{y}{y_{\min,j}}\right) ^{-\alpha _{j}}  \label{eq:RV approx}
\end{equation}%
for all values $y>y_{\min,j}$ for sufficiently large $y_{\min,j}$. In practice, $y_{\min,j}$ plays the role of a tuning parameter and we will discuss the choice of $y_{\min,j} $ later.

Since $\hat{\beta}_{j}\left( \cdot \right) $ consistently estimates $F_{j}\left( \cdot \right) $, we can estimate $\alpha _{j}$ by fitting $\hat{\beta}_{j}\left( \cdot \right) $ with the Pareto distribution. 
By \eqref{eq:beta} and \eqref{eq:RV approx}, we have
\begin{equation*}
\log \left( \frac{1-\beta _{j}\left( y\right) }{1-\beta _{j}\left(y_{\min,j}\right) }\right) \approx -\alpha _{j}\log \left( \frac{y}{y_{\min,j}}\right)
\end{equation*}
for $y>y_{\min,j}$. 
Integrating both sides with respect to $y$ motivates the estimator 
\begin{equation}
\hat{\alpha}_{j}=-\frac{\int_{y_{\min,j}}^{\infty}\log \left( \frac{1-\hat{\beta}_{j}\left( y\right) }{1-\hat{\beta}_{j}\left( y_{\min,j}\right) }\right) w_j\left( y\right) dy}{\int_{y_{\min,j}}^{\infty}\log \left( \frac{y}{y_{\min,j}}\right) w_j\left( y\right) dy}  \label{eq:alpha hat}
\end{equation}
of the Pareto exponent $\alpha _{j}$, where $w\left( \cdot \right) $ is some weighting function. This weighting function serves as regularization so that integrands in the numerator and the denominator are integrable. 

It is natural to choose $w_j\left( \cdot \right) $ as the true density of the
Pareto distribution with exponent $\alpha _{j}$, that is, $w_j\left( y\right) ={y^{-\alpha _{j}-1}}/{y_{\min,j}^{-\alpha _{j}}}.$ 
However, this is infeasible as $\alpha _{j}$ is unknown. 
As an alternative, we can choose any $\omega >0$ and set 
\begin{equation}
w_j\left( y\right) ={y^{-\omega -1}}/{y_{\min,j}^{-\omega }}.  \label{eq:w}
\end{equation}
In this case, the denominator of \eqref{eq:alpha hat} reduces to 
\begin{equation}
\int_{y_{\min,j}}^{\infty}\log \left( \frac{y}{y_{\min,j}}\right) \frac{y^{-\omega -1}}{y_{\min,j}^{-\omega }}dy=\int_{1}^{\infty}\log \left( s\right) s^{-\omega -1}dy\rightarrow \frac{1}{\omega ^{2}}. \label{eq:w2}
\end{equation}

To derive the asymptotic distribution of $\hat{\alpha}_{j}$, a second-order approximation is inevitable besides \eqref{eq:RV}. 
Specifically, we impose the following condition to formally characterize \eqref{eq:RV approx}.

\begin{assumption}
\label{a:pareto} For each $j\in \{0,1\}$, $F_j(\cdot)$ satisfies 
\begin{equation*}
1-F_{j}\left( y\right) =c_{j}y^{-\alpha_{j}}(1+d_{j}y^{-\rho_{j}}+o(y^{-\rho _{j}})) \text{ as } y\rightarrow \infty,
\end{equation*}
for some constants $\alpha _{j}>0$, $\rho _{j}>0$, $c_{j}>0$, and $d_{j}\in\mathbb{R}$.
\end{assumption}

First, we remark that this assumption is mild and satisfied by all the commonly used families of heavy-tailed distributions such as Student-t, F, Cauchy, etc. 
In particular, for the standard Student-t distribution with $v$ degrees of freedom, it holds with $\alpha _{j}=v$ and $\rho _{j}=2$. 
See, for example, \citet[][p.62]{de2006extreme}.

Second, Assumption \ref{a:pareto} is used to control the leading bias of the estimator \eqref{eq:alpha hat}. 
It is, therefore, analogous to the condition of bounded second (or higher) order derivatives in the context of kernel estimation.
The parameters, $d_j$ and $\rho_j$, measure deviation of $F_j(\cdot)$ away from the reference Pareto distribution with exponent $\alpha_j$. 

To obtain limiting distribution for the estimator $\hat{\alpha}_j$ around a pseudo true value, it suffices to choose $y_{\min,j}$ according to the following assumption.

\begin{assumption}\label{a:y_bar} 
As $n\rightarrow \infty $, $y_{\min,j}\rightarrow \infty $, and $y_{\min,j}^{\alpha _{j}}\Xi(y_{\min,j}, y_{\min,j})^{1/2}/\tilde{r}_{n}\rightarrow 0$ for each $j\in \{0,1\}$, where $\tilde{r}_n$ is given in Assumption \ref{a:beta_hat}. 
\end{assumption}

Assumption \ref{a:y_bar} concerns limiting behaviors of the regularization parameter $y_{\min,j}$ as $n \rightarrow \infty$.
It controls the bias due to the deviation of $F_j$ from the benchmark Pareto distribution, which converges at the rate of $y^{-\rho_j}$ as $y \rightarrow \infty$.

The following theorem establishes the asymptotic distribution of $\hat{\alpha}_j$.

\begin{theorem}
\label{thm:index} Suppose Assumptions \ref{a:beta_hat} to \ref{a:y_bar} hold. With the weighting function \eqref{eq:w}, it holds that for each $j\in \{0,1\}$
\begin{equation*}
A_{jn}^{-1/2}\left( \hat{\alpha}_{j}-\alpha _{j}+B_{jn}\right) \overset{d}{\rightarrow }\mathcal{N}\left( 0,1\right), 
\end{equation*}
where 
\begin{align*}
A_{jn}=&\; \omega ^{4}Var\left[ r_{n}^{-1}\int_{y_{\min,j}}^{\infty}\left( \frac{Z\left( y\right) }{c_jy^{-\alpha _{j}}}-\frac{Z\left( y_{\min,j}\right) }{c_jy_{\min,j}^{-\alpha _{j}}}\right) w_j\left( y\right) dy\right] \qquad \text{and} \\
B_{jn}=&\; d_{j}\omega ^{2}y_{\min,j}^{-\rho _{j}}\left( \frac{1}{\rho_{j}+\omega }-\frac{1}{\omega }\right) .
\end{align*}
\end{theorem}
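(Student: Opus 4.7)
The plan is to decompose $\hat\alpha_j - \alpha_j + B_{jn}$ into a deterministic bias remainder plus a linear stochastic functional of $\hat\beta_j - \beta_j$, then pass to the Gaussian limit via Assumption~\ref{a:beta_hat}. The natural starting point is the identity
\begin{align*}
\log\!\left(\frac{1-\hat\beta_j(y)}{1-\hat\beta_j(y_{\min,j})}\right)
=&\; \log\!\left(\frac{1-\beta_j(y)}{1-\beta_j(y_{\min,j})}\right) \\
&\;+ \log\!\left(\frac{1-\hat\beta_j(y)}{1-\beta_j(y)}\right) - \log\!\left(\frac{1-\hat\beta_j(y_{\min,j})}{1-\beta_j(y_{\min,j})}\right),
\end{align*}
which cleanly separates the population tail behavior (first line) from the estimation noise (second line) inside the numerator of \eqref{eq:alpha hat}. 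The denominator is purely deterministic and equals $1/\omega^2$ by \eqref{eq:w2}, so I can work throughout with $-\omega^2$ times the numerator.

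For the bias part, I would substitute the expansion from Assumption~\ref{a:pareto}, namely $1-\beta_j(y) = c_j y^{-\alpha_j}(1 + d_j y^{-\rho_j} + o(y^{-\rho_j}))$, into $\log((1-\beta_j(y))/(1-\beta_j(y_{\min,j})))$. The leading $-\alpha_j \log(y/y_{\min,j})$ cancels against the denominator to yield $\alpha_j$, while the second-order piece $d_j(y^{-\rho_j} - y_{\min,j}^{-\rho_j}) + o(\cdot)$ integrates against $w_j(y) = y^{-\omega-1}/y_{\min,j}^{-\omega}$ through the elementary identities $\int_{y_{\min,j}}^\infty y^{-\rho_j} w_j(y)\,dy = y_{\min,j}^{-\rho_j}/(\rho_j + \omega)$ and $\int_{y_{\min,j}}^\infty w_j(y)\,dy = 1/\omega$, producing exactly $B_{jn}$ after multiplication by $-\omega^2$. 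The residual $o(y^{-\rho_j})$ contribution is negligible relative to $y_{\min,j}^{-\rho_j}$ under Assumption~\ref{a:y_bar}, so the overall deterministic piece is $\alpha_j - B_{jn}$.

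For the stochastic part I would Taylor expand $\log((1-\hat\beta_j(y))/(1-\beta_j(y))) = -(\hat\beta_j(y) - \beta_j(y))/(1-\beta_j(y)) + R_j(y)$ and replace $1 - \beta_j(y)$ by its leading-order proxy $c_j y^{-\alpha_j}$ (the replacement error is absorbed into the second-order tail terms already accounted for). By Assumption~\ref{a:beta_hat}, $r_n(\hat\beta_j - \beta_j) \Rightarrow Z_j$, so a continuous-mapping argument applied to the continuous linear functional $g \mapsto \omega^2 \int_{y_{\min,j}}^\infty [g(y)/(c_j y^{-\alpha_j}) - g(y_{\min,j})/(c_j y_{\min,j}^{-\alpha_j})] w_j(y)\,dy$ delivers an asymptotically Gaussian stochastic term with variance exactly $A_{jn}$. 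Standardizing by $A_{jn}^{1/2}$ yields the claimed $\mathcal{N}(0,1)$ limit. The main obstacle is showing that the quadratic Taylor remainder $R_j(y)$, integrated against $w_j$, is $o_p(A_{jn}^{1/2})$: this requires uniform control of $(\hat\beta_j(y) - \beta_j(y))/(1 - \beta_j(y))$ over the unbounded range $y \ge y_{\min,j}$, where the denominator is vanishing. This is precisely the role of Assumption~\ref{a:xi}, which bounds $|\hat\beta_j - \beta_j|$ by $\Xi_j(y,y)^{1/2}/\tilde r_n$ uniformly, combined with the tuning rate in Assumption~\ref{a:y_bar} that makes $y_{\min,j}^{\alpha_j}\Xi_j(y_{\min,j}, y_{\min,j})^{1/2}/\tilde r_n$ vanish.
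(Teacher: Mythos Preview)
Your proposal is correct and follows essentially the same route as the paper's proof: the paper introduces the infeasible intermediary $\tilde\alpha_j$ built from $\beta_j$ and decomposes $\hat\alpha_j-\alpha_j=(\tilde\alpha_j-\alpha_j)+(\hat\alpha_j-\tilde\alpha_j)$, which is exactly your log-ratio identity written out term by term; the bias calculation, the linearization $\log(1+x)=x+o(x)$, the replacement of $1-\beta_j(y)$ by $c_jy^{-\alpha_j}$, and the use of Assumptions~\ref{a:xi}--\ref{a:y_bar} to control the remainder uniformly over $y\ge y_{\min,j}$ all match.
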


\noindent A proof is found in Appendix \ref{sec:proof:index}.

Theorem \ref{thm:index} derives the asymptotic normality of $\hat{\alpha}_{j}$. 
The $A_{jn}$ and $B_{jn}$ terms respectively characterize the variance and the bias. 
One can control the balance between the bias and variance via the tuning parameter $y_{\min,j}$. 
A larger $y_{\min,j}$ leads to smaller bias and larger variance. 
The theoretically optimal choice $y_{\min,j}^{\ast}$ requires $A_{jn}^{-1/2}\times B_{jn} \sim 1$. 
If we further assume that the variance $\Xi_j(y, y)$ of $Z_j(y)$ is approximately proportional to $y^{-\kappa_j}$ for some $\kappa_j>0$ in the limit as $y \rightarrow \infty$, then we can derive the optimal choice as $y_{\min,j}^{\ast} \sim r_{n}^{1/\left( \alpha _{j}-\kappa_{j}/2-1/2+\rho_{j}\right) }$. 
Since the second-order parameter $\rho _{j}$ is challenging to estimate in practice and the bias should vanish faster for the purpose of inference, we recommend setting $y_{\min,j}^{\ast} /r_{n}^{1/\left(\alpha_{j}-\kappa _{j}/2-1/2+\rho _{j}\right) }\rightarrow \infty $ so that $A_{jn}^{-1/2}\times B_{jn}=o(1)$. 
In other words, $y_{\min,j}^{\ast}$ is an undersmoothing choice.
With this said, such a rule must be subject to a feasible choice of $y_{\min,j}$ satisfying Assumption \ref{a:y_bar}, i.e., $r_{n}^{1/\left(\alpha_{j}-\kappa _{j}/2-1/2+\rho _{j}\right) } \prec y_{\min,j} \prec \tilde{r}_n^{1/(\alpha_j - \kappa_{j}/2)}$, which further requires that $\rho_j>1/2$. 
Note that a smaller $\rho_j$ (closer to zero) implies a larger deviation from the benchmark Pareto distribution.
The implicit requirement, $\rho_j>1/2$, for a feasible choice of $y_{\min,j}$, therefore, rules out huge deviations away (i.e., $\rho_j$ close to zero) from the benchmark Pareto distribution.\footnote{Student-t distribution entails $\rho_j = 2$ and hence satisfies this restriction.}

To conduct valid inference, we still need to know the constant $c_{j}$ in $A_{jn}$ as specified in Assumption \ref{a:pareto}, which can be consistently estimated by 
\begin{equation*}
\hat{c}_{j}=\left( 1-\hat{\beta}_{j}\left( y_{\min,j}\right) \right) y_{\min,j}^{\hat{\alpha}_{j}}. 
\end{equation*}
However, the asymptotic property of such estimator is unknown, and its finite sample performance might not be satisfactory. 
In addition, the asymptotic variance of the quantile treatment effect estimator becomes more complicated, as shown in the Corollary \ref{col:qte} below. 
Therefore, we propose an alternative inference method based on subsampling and establishes its asymptotic validity in Section \ref{sec:subsampling}.  
Conveniently, this method does not require estimating $c_j$ or other second-order parameters.

\section{Implications of the Main Theoretical Result}\label{sec:implication}

Given the estimator of $\alpha_j$ and its asymptotic behavior, we now proceed to estimate the extreme QTE.
Section \ref{sec:qte} presents our extreme QTE estimator and derives its asymptotic distribution.
However, the asymptotic distribution involves complicated variance expressions that vary across different applications (e.g., between Examples \ref{ex:abadie} and \ref{ex:frandsen}).
To overcome these challenges, Section \ref{sec:subsampling} proposes to conduct inference for the QTE based on subsampling as a general recipe in practice. 
A theoretical guarantee will be provided for this proposal.

\subsection{Extreme Quantile Treatment Effects}\label{sec:qte}

Our extreme QTE estimator can be written as
\begin{align}\label{eq:qte}
\widehat{QTE}(q) = 
y_{\min,1} \left( \frac{1-\hat\beta_1(y_{\min,1})}{1-q} \right)^{1/\hat\alpha_1}
-
y_{\min,0} \left( \frac{1-\hat\beta_0(y_{\min,0})}{1-q} \right)^{1/\hat\alpha_1}
\end{align}
The corollary below establishes its asymptotic distribution. 
Define
\begin{equation*}
\lambda _{jn}=\frac{A_{jn}^{-1/2}}{F_{j}^{-1}\left( 1-q\right) \log d_{jn}} \quad\text{ and }\quad \underline{\lambda }_{n}=\min \{\lambda _{1n},\lambda _{0n}\}
\end{equation*}
for $j=0,1,$
where 
\begin{equation*}
d_{jn}=\frac{1-\beta _{j}\left( y_{\min ,j}\right) }{q}.
\end{equation*}

\begin{corollary}\label{col:qte}
If Assumptions \ref{a:beta_hat}-\ref{a:y_bar} hold with $A_{jn}^{-1/2}B_{jn}\rightarrow 0$ and $\left( 1-\beta _{j}\left( y_{\min ,j}\right) \right) /q\rightarrow \infty $,
then we have
\begin{equation*}
\underline{\lambda }_{n}\left( \widehat{QTE}\left( q\right) -QTE\left( q\right) \right)\overset{d}{\rightarrow }\mathcal{N}\left( 0,\Omega _{Q}\right),
\end{equation*}
where the expression of $\Omega _{Q}$ is presented in \eqref{eq:Omega_Q} in the proof. 
\end{corollary}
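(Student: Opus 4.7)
The plan is to linearize the estimator $\widehat{QTE}(q) = \hat Q_1 - \hat Q_0$, where $\hat Q_j = y_{\min,j}\bigl((1-\hat\beta_j(y_{\min,j}))/(1-q)\bigr)^{1/\hat\alpha_j}$, around its Pareto approximation and to show that the leading stochastic term is a linear functional of $\hat\alpha_j-\alpha_j$, whose joint asymptotic behavior is controlled by Theorem \ref{thm:index}. First I would use Assumption \ref{a:pareto} to write $F_j^{-1}(q) = y_{\min,j} d_{jn}^{1/\alpha_j}(1+o(y_{\min,j}^{-\rho_j}))$, so that $\log\hat Q_j - \log F_j^{-1}(q)$ decomposes into (i) an $\hat\alpha$-contribution $-\alpha_j^{-2}(\hat\alpha_j-\alpha_j)\log d_{jn}$ obtained by differentiating $x\mapsto x^{1/\hat\alpha_j}$ at $d_{jn}$; (ii) a $\hat\beta$-contribution $-\alpha_j^{-1}\bigl(\hat\beta_j(y_{\min,j})-\beta_j(y_{\min,j})\bigr)/(1-\beta_j(y_{\min,j}))$; and (iii) the vanishing deterministic bias controlled by the hypothesis $A_{jn}^{-1/2}B_{jn}\to 0$.

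The next step is a rate comparison. By Theorem \ref{thm:index}, term (i) is of order $A_{jn}^{1/2}\log d_{jn}$, while by Assumptions \ref{a:beta_hat} and \ref{a:xi} term (ii) is of order $\tilde r_n^{-1}\,\Xi_j(y_{\min,j},y_{\min,j})^{1/2}/(1-\beta_j(y_{\min,j}))$. The hypothesis $(1-\beta_j(y_{\min,j}))/q\to\infty$ makes $\log d_{jn}\to\infty$ and, combined with Assumption \ref{a:y_bar}, forces (ii) to be negligible relative to (i) once normalized by $\lambda_{jn}=A_{jn}^{-1/2}/(F_j^{-1}(1-q)\log d_{jn})$. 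Converting back from logs via $\hat Q_j-F_j^{-1}(q)\approx F_j^{-1}(q)(\log\hat Q_j-\log F_j^{-1}(q))$ yields the linear representation
\begin{equation*}
\lambda_{jn}\bigl(\hat Q_j - F_j^{-1}(q)\bigr) = -\alpha_j^{-2} A_{jn}^{-1/2}(\hat\alpha_j-\alpha_j) + o_p(1).
\end{equation*}

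I would then upgrade Theorem \ref{thm:index} to the joint statement for $j=0,1$: because $\hat\alpha_j$ is a continuous linear functional of $\hat\beta_j$ (see \eqref{eq:alpha hat}) and Assumption \ref{a:beta_hat} provides joint weak convergence of $(\hat\beta_1,\hat\beta_0)$ to the Gaussian process with cross-covariance $\Xi_{10}$, the continuous-mapping/functional delta method delivers joint asymptotic normality of $\bigl(A_{1n}^{-1/2}(\hat\alpha_1-\alpha_1),\,A_{0n}^{-1/2}(\hat\alpha_0-\alpha_0)\bigr)$, with the off-diagonal entry of the limit covariance inherited from $\Xi_{10}$ through the same integral representation that defines $A_{jn}$. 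Finally I would write
\begin{equation*}
\underline\lambda_n(\widehat{QTE}(q)-QTE(q)) = \tfrac{\underline\lambda_n}{\lambda_{1n}}\lambda_{1n}(\hat Q_1-F_1^{-1}(q)) - \tfrac{\underline\lambda_n}{\lambda_{0n}}\lambda_{0n}(\hat Q_0-F_0^{-1}(q)),
\end{equation*}
observe that $\underline\lambda_n/\lambda_{jn}\in[0,1]$ and apply continuous mapping to obtain a centered Gaussian limit whose variance $\Omega_Q$ is the corresponding quadratic form in $(\alpha_1^{-2},\alpha_0^{-2})$ with the $2\times 2$ covariance matrix from the previous step.

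The main obstacle I anticipate is in the rate comparison of Step 2: one must verify that term (ii) is genuinely dominated by term (i) after normalization by $\lambda_{jn}$, and that the second-order Pareto bias in (iii) does not re-enter through $F_j^{-1}(q)$ when we pass from log-scale to levels. Careful bookkeeping is also needed in Step 3 so that the cross-covariance $\Xi_{10}$ appears in $\Omega_Q$ with the correct integral weights consistent with the definition of $A_{jn}$; the rest of the argument is a routine delta-method calculation.
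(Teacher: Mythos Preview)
Your proposal is correct and follows essentially the same route as the paper: both isolate the $\hat\alpha_j$-contribution as the leading stochastic term, show that the $\hat\beta_j(y_{\min,j})$-contribution and the second-order Pareto bias are negligible after normalization by $\lambda_{jn}$, and then combine the two pieces for $j=0,1$ via continuous mapping. The only cosmetic difference is that you linearize on the log scale, whereas the paper writes a multiplicative decomposition $\Delta_{1n}(\Delta_{2n}+\Delta_{3n}-\Delta_{4n})$ in which $\Delta_{3n}$ carries your term~(i) and $\Delta_{2n},\Delta_{4n}$ absorb your terms~(ii) and~(iii); your explicit handling of the cross-covariance through $\Xi_{10}$ is, if anything, slightly more careful than the paper's statement of \eqref{eq:conv:quan}.
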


\noindent
A proof is found in Appendix \ref{sec:col:qte}. 

The QTE involves the estimators $\hat{F}^{-1}_1(\cdot)$ and $\hat{F}^{-1}_0(\cdot)$, which converge at different rates respectively denoted by $\lambda_{1n}$ and $\lambda_{0n}$.  
Therefore, the convergence rate of $\widehat{QTE}$ is determined by the slower one, i.e., $\underline{\lambda }_{n}=\min \{\lambda _{1n},\lambda _{0n}\}$. 
Also, the asymptotic variance $\Omega _{Q}$ involves $\lim_{n\rightarrow\infty} \underline{\lambda }_{n}/\lambda_{1n}$ and $\lim_{n\rightarrow\infty} \underline{\lambda }_{n}/\lambda_{0n}$, which are bounded between zero and one, with the larger one being exactly one. 
In addition, it involves the covariance between $\hat{\alpha}_1$ and $\hat{\alpha}_0$, whose expression is complicated. 
We present the expression in the proof for brevity because we will not use it in practice thanks to the subsampling method presented in the following subsection.

\subsection{Inference}\label{sec:subsampling}

The asymptotic distribution of the QTE estimator is complex, making direct inference based on the analytically estimated $\Omega _{Q}$ challenging to implement. Additionally, the analytic expression for this distribution can vary across different contexts (e.g., between Examples \ref{ex:abadie} and \ref{ex:frandsen}), and no universal approach can be recommended for all applications. Moreover, previous studies have demonstrated that the bootstrap method performs poorly in estimating tail features \citep[e.g.,][]{hall1990using, bickel2008choice}. To address these challenges, we propose using subsampling as an alternative and establish its asymptotic validity.

Specifically, consider a sequence of subsample sizes $b=b_{n}$ that grows with $b/n\rightarrow 0$ as $n\rightarrow \infty $. Let $B_{n}=\binom{n}{b}$
denote the total possible number of subsamples of size $b$. 
For a given $b$ and $t\in \{1,...,b_{n}\}$, let $S_{t}\subset \{1,...,n\}$ be one of the $B_{n}$ subsamples of the individual indices with $|S_{t}|=b$, and define the estimator \eqref{eq:qte} based on the $t$-th subsample as $\hat{\alpha}_{j}^{t}$ for $j=0,1$. 
Furthermore, denote the corresponding QTE estimator based on the $t$-th subsample as
\begin{equation*}
\widehat{QTE}\left( q\right) ^{t}=y_{\min ,1}\left( \frac{1-\hat{\beta}_{1}\left( y_{\min ,1}\right) }{1-q}\right) ^{1/\hat{\alpha}_{1}^{t}}-y_{\min ,0}\left( \frac{1-\hat{\beta}_{0}\left( y_{\min ,0}\right) }{1-q}\right) ^{1/\hat{\alpha}_{0}^{t}}.
\end{equation*}
Note that the estimator $\hat{\beta}_{j}\left( \cdot \right) $ for the counterfactual CDF still uses the original full sample.  

Given $B_{n}$ subsampling estimates, we propose to use the empirical CDF
\begin{equation*}
L_{n,b}\left( s\right) =\frac{1}{B_{n}}\sum_{t=1}^{B_{n}}\mathbf{1}\left[ \underline{\lambda }_{b_n} \left( \widehat{QTE}\left( q\right) ^{t}-\widehat{QTE}\left( q\right) \right) \leq s\right] 
\end{equation*}
to approximate the CDF of $\widehat{QTE}\left( q\right) $, denoted by $L^{\ast }\left( \cdot \right) $.
The following corollary provides a theoretical guarantee that this subsampling approximation works asymptotically.

\begin{corollary}\label{col:subsampling}
Suppose that the conditions in Corollary 1 hold. If $b\rightarrow \infty $ and $b/n\rightarrow 0$ as $n\rightarrow \infty $, then we have
\begin{equation*}
\sup_{s\in \mathbb{R}}\left\vert L_{n,b}\left( s\right) -L^{\ast }\left( s\right) \right\vert \overset{p}{\rightarrow }0.
\end{equation*}
\end{corollary}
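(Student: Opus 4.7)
The plan is to follow the standard subsampling machinery of Politis, Romano, and Wolf, adapted to the hybrid structure of our statistic: the centering quantities $\hat\beta_j(y_{\min,j})$ are computed on the full sample, while the tail index estimators $\hat\alpha_j^t$ are computed on a size-$b$ subsample. The argument proceeds by (i) obtaining a ``true-value-centered'' limit for the subsample statistic via Corollary \ref{col:qte} applied at sample size $b$, (ii) showing that the gap between this ideal centering and the full-sample centering is asymptotically negligible after rescaling by $\underline{\lambda}_{b_n}$, (iii) establishing pointwise concentration of $L_{n,b}$ through a U-statistic inequality, and (iv) upgrading to uniform convergence via Polya's theorem.

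First, Corollary \ref{col:qte}, applied to an i.i.d.\ subsample of size $b$, yields $\underline{\lambda}_{b_n}\bigl(\widehat{QTE}(q)^{t} - QTE(q)\bigr) \overset{d}{\to} \mathcal{N}(0,\Omega_Q)$, provided we can treat the plug-ins $\hat\beta_j(y_{\min,j})$ (coming from the full sample) as if they were the true $\beta_j(y_{\min,j})$ within the subsample analysis. This is justified because $\hat\beta_j(y_{\min,j}) - \beta_j(y_{\min,j}) = O_p(r_n^{-1})$ is asymptotically negligible relative to the tail-driven subsample rate $\underline{\lambda}_{b_n}^{-1}$ — the same way full-sample-estimated nuisance parameters vanish in leading-order subsampling analysis. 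Second, because $b/n \to 0$ implies $\underline{\lambda}_{b_n}/\underline{\lambda}_n \to 0$, the random shift between the two centerings satisfies
\begin{equation*}
\underline{\lambda}_{b_n}\bigl(\widehat{QTE}(q)-QTE(q)\bigr) \;=\;\frac{\underline{\lambda}_{b_n}}{\underline{\lambda}_n}\cdot\underline{\lambda}_{n}\bigl(\widehat{QTE}(q)-QTE(q)\bigr) \;=\; o_p(1),
\end{equation*}
using the full-sample tightness from Corollary \ref{col:qte}.

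Third, define the idealized empirical distribution $U_{n,b}(s) = B_n^{-1}\sum_t \mathbf{1}[\underline{\lambda}_{b_n}(\widehat{QTE}(q)^{t} - QTE(q)) \leq s]$. Conditional on the full sample (hence on $\hat\beta_j(y_{\min,j})$), $U_{n,b}(s)$ is an average of indicators of a symmetric kernel of order $b$ evaluated at all $B_n = \binom{n}{b}$ subsamples, i.e.\ a U-statistic. Hoeffding's inequality for U-statistics gives $|U_{n,b}(s) - \mathbb{E}[U_{n,b}(s) \mid \text{full sample}]| \overset{p}{\to} 0$ at each fixed $s$, and the conditional expectation converges to $L^{*}(s)$ by Step~1 plus dominated convergence. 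The relation $L_{n,b}(s) = U_{n,b}\bigl(s + \underline{\lambda}_{b_n}(\widehat{QTE}(q) - QTE(q))\bigr)$ together with the $o_p(1)$ shift from Step~2 and continuity of $L^{*}$ then yields pointwise convergence $L_{n,b}(s) \overset{p}{\to} L^{*}(s)$. Uniformity follows from Polya's theorem because $L^{*}$, being Gaussian, is continuous and monotone.

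The main obstacle is the first step: rigorously showing that using the full-sample $\hat\beta_j(y_{\min,j})$ inside each subsample estimator does not distort the limiting law claimed by Corollary \ref{col:qte} at sample size $b$. This requires tracking the linearization in the proof of Corollary \ref{col:qte} and verifying that the contribution of $\hat\beta_j(y_{\min,j}) - \beta_j(y_{\min,j})$ is of smaller order than the subsample tail-index fluctuation $\underline{\lambda}_{b_n}^{-1}$; this in turn follows from Assumption \ref{a:y_bar} because $y_{\min,j}^{\alpha_j}\Xi(y_{\min,j},y_{\min,j})^{1/2}/r_n \to 0$ controls the full-sample error at $y_{\min,j}$, which is dominated by the corresponding subsample tail-rate once $b/n\to 0$. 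Once this is in place, the remaining ingredients are textbook subsampling arguments.
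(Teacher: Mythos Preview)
Your proposal is correct and follows essentially the same route as the paper, whose proof is a one-line invocation of Theorem~2.2.1 in \citet{politis1999subsampling}; your four steps (subsample limit via Corollary~\ref{col:qte}, $o_p(1)$ recentering shift from $\underline{\lambda}_{b_n}/\underline{\lambda}_n\to 0$, U-statistic concentration, and Polya's theorem) are exactly the ingredients of that theorem's proof. Your explicit handling of the hybrid structure---full-sample $\hat\beta_j(y_{\min,j})$ plugged into the subsample statistic---is a detail the paper's one-line citation glosses over entirely.
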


\noindent
A proof is found in Appendix \ref{sec:col:subsampling}.

In light of this theoretical result, we shall use the subsampling in the subsequent numerical and empirical analyses.








\section{Simulations}

In this section, we use simulated data to analyze the finite-sample performance of our proposed method.
We revisit Examples \ref{ex:abadie} and \ref{ex:frandsen} for simulation designs, and present them in Sections \ref{sec:simulations:abadie} and \ref{sec:simulations:rdd}, respectively.

\label{sec:simulations} 

\subsection{Example \ref{ex:abadie}: Binary Instrument and Binary Treatment}\label{sec:simulations:abadie}

Recall Example \ref{ex:abadie} from Section \ref{sec:main}.
In its setting, we generate independent copies of $(Y, D, Z, X' )'$ as follows.
An individual is an always taker, a complier, or a never taker, indicated by $\mathbbm{1}_A$, $\mathbbm{1}_C$, and $\mathbbm{1}_N$, respectively.
Each type emerges with the probability of $1/3$.
We generate $10$-dimensional exogenous covariates
$
X \sim \mathcal{N}(0,I_{10}),
$
where $I_{k}$ denotes the $k\times k$ identity matrix.
We in turn generate the instrument $Z \sim \text{Bernoulli}(\Lambda(X'\gamma))$, where $\Lambda$ is the logistic link function defined by $\Lambda(u) = \exp^u / (1+\exp^u)$ and $\gamma = (0.1,...,0.1)'$.
The treatment selection is generated by
\begin{align*}
D = \mathbbm{1}_A + \mathbbm{1}_C Z.
\end{align*}
The potential outcomes are given by 
\begin{align*}
Y_0 &= 
\mathbbm{1}_A \left(t_{0,A}-\frac{TE_A}{2}\right) +
\mathbbm{1}_C \left(t_{0,C}-\frac{TE_C}{2}\right) + 
\mathbbm{1}_N \left(t_{0,N}-\frac{TE_N}{2}\right)
\qquad\text{and}\\
Y_1 &= 
\mathbbm{1}_A \left(t_{1,A}+\frac{TE_A}{2}\right) +
\mathbbm{1}_C \left(t_{1,C}+\frac{TE_C}{2}\right) + 
\mathbbm{1}_N \left(t_{1,N}+\frac{TE_N}{2}\right),
\end{align*}
where
$TE_A=2$, $TE_C=1$, and $TE_N=0$ are the treatment effects for always takers, compliers, and never takers, respectively, and $t_{0,A}$, $t_{1,A}$, $t_{0,A}$, $t_{1,A}$, $t_{0,A}$ and $t_{1,A}$ are independently drawn from the Student-t distribution with 10 degrees of freedom.
Finally, the observed outcome is generated by
$$
Y = (1-D)Y_0 + D Y_1.
$$
In this manner we generate $n \in \{2500,5000,10000\}$ independent copies of $(Y,D,Z,X')'$.

Since lower extreme quantiles are often of policy interest, we focus on the QTE at $q \in \{0.01,...,0.05\}$.
We thus flip the sign of $Y$, and obtain the negative treatment effects.
The tuning parameter $y_{\min,j}$ is set as the 97.5-th percentile of $\hat\beta_j$.
We compute simulation statistics, such as the bias, standard deviation, root mean square error, and the 95\% coverage frequency based on 10,000 Monte Carlo iterations. 
Table \ref{tab:simlations:abadie} summarizes the simulation results.

\begin{table}
\centering
\begin{tabular}{rlrrrrr}
\multicolumn{7}{c}{Quantile Treatment Effects}\\
\hline\hline
&& \multicolumn{5}{c}{Quantile $q$}\\
\cline{3-7}
&& 0.005  & 0.010  & 0.015  & 0.020  & 0.025 \\
\hline
$n=2500$
& Bias & -6.020 & -0.022 & 0.007 & 0.001 & -0.008\\
& SD   & $>$10 & 2.530 & 0.401 & 0.369 & 0.373\\
& RMSE & $>$10 & 2.530 & 0.401 & 0.369 & 0.373\\
\cline{2-7}
& 95\% & 1.000 & 0.993 & 0.977 & 0.976 & 0.986\\
\hline
$n=5000$
& Bias & -0.014 & 0.008 & 0.007 & 0.002 & -0.005\\
& SD   & 0.503 & 0.326 & 0.275 & 0.259 & 0.257\\
& RMSE & 0.503 & 0.326 & 0.275 & 0.259 & 0.257\\
\cline{2-7}
& 95\% & 0.987 & 0.943 & 0.931 & 0.937 & 0.946\\
\hline
$n=10000$
& Bias & 0.024 & 0.026 & 0.017 & 0.008 & -0.001\\
& SD   & 0.347 & 0.231 & 0.195 & 0.183 & 0.180\\
& RMSE & 0.347 & 0.232 & 0.195 & 0.183 & 0.180\\
\cline{2-7}
& 95\% & 0.953 & 0.923 & 0.925 & 0.932 & 0.938\\
\hline\hline
\end{tabular}
\caption{Simulation results for Example \ref{ex:abadie}: binary instrument and binary treatment. The results are displayed for the quantiles $q \in \{0.005, 0.010, 0.015, 0.020, 0.025\}$ and sample sizes $n \in \{2500, 5000, 10000\}$. Displayed for each set of simulations are the bias, standard deviation (SD), root mean square error (RMSE), and 95\% coverage frequency.}${}$
\label{tab:simlations:abadie}
\end{table}

Observe that the RMSE diminishes as the sample size $n$ increases for each quantile.
Also, observe that the coverage frequency approach gets closer to the nominal probability of 95\% as the sample size $n$ increases for each quantile.
As the sample size increase, the coverage become more accurate for the extreme quantiles.

\subsection{Example \ref{ex:frandsen}: Regression Discontinuity Design}\label{sec:simulations:rdd}

Recall Example \ref{ex:frandsen} from Section \ref{sec:main}.
In its setting, we generate independent copies of $(Y,D,R)'$ as follows.
As in the previous subsection, an individual is an always taker, a complier, or a never taker, indicated by $\mathbbm{1}_A$, $\mathbbm{1}_C$, and $\mathbbm{1}_N$, respectively.
Each type emerges with the probability of $1/3$.
We generate the running variable
$
R \sim \mathcal{N}(0,1).
$
We in turn generate the treatment selection by
\begin{align*}
D = \mathbbm{1}_A + \mathbbm{1}_C \widetilde D(R),
\end{align*}
where $\widetilde D(R)\sim \text{Bernoulli}(1/3 + \mathbbm{1}\{R>0\}/3)$.
The potential outcomes are given by 
\begin{align*}
Y_0 &= 0.1 \cdot D +
\mathbbm{1}_A \left(t_{0,A}-\frac{TE_A}{2}\right) +
\mathbbm{1}_C \left(t_{0,C}-\frac{TE_C}{2}\right) + 
\mathbbm{1}_N \left(t_{0,N}-\frac{TE_N}{2}\right)
\quad\text{and}\\
Y_1 &= 0.1 \cdot D + 
\mathbbm{1}_A \left(t_{1,A}+\frac{TE_A}{2}\right) +
\mathbbm{1}_C \left(t_{1,C}+\frac{TE_C}{2}\right) + 
\mathbbm{1}_N \left(t_{1,N}+\frac{TE_N}{2}\right),
\end{align*}
where
$TE_A=2$, $TE_C=1$, and $TE_N=0$ are the treatment effects for always takers, compliers, and never takers, respectively, and $t_{0,A}$, $t_{1,A}$, $t_{0,A}$, $t_{1,A}$, $t_{0,A}$ and $t_{1,A}$ are independently drawn from the Student-t distribution with 10 degrees of freedom.
Finally, the observed outcome is generated by
$$
Y = (1-D)Y_0 + D Y_1.
$$
In this manner, we generate $n \in \{2500,5000,10000\}$ independent copies of $(Y,D,R)'$.

As in the previous example, we focus on the QTE at $q \in \{0.01,...,0.05\}$.
We thus flip the sign of $Y$, and obtain the negative treatment effects.
The tuning parameter $y_{\min,j}$ is set as the 97.5-th percentile of $\hat\beta_j$.
For the limits of the conditional expectation function, we use the one-sided Epanechnikov kernel with the rule-of-thumb bandwidth $h = \hat\sigma_R n^{-1/5}$, where $\hat\sigma_R^2$ denotes the sample variance of $R$.
We compute simulation statistics, such as the bias, standard deviation, root mean square error, and the 95\% coverage frequency based on 10,000 Monte Carlo iterations. 
Table \ref{tab:simlations:abadie} summarizes the simulation results.

\begin{table}
\centering
\begin{tabular}{rlrrrrr}
\multicolumn{7}{c}{Quantile Treatment Effects}\\
\hline\hline
&& \multicolumn{5}{c}{Quantile $q$}\\
\cline{3-7}
&& 0.005  & 0.010  & 0.015  & 0.020  & 0.025 \\
\hline
$n=2500$
& Bias & $<$-10 & $<$-10 & 0.080 & 0.109 & 0.095\\
& SD   & $>$10 & $>$10 & 1.493 & 0.440 & 0.428\\
& RMSE & $>$10 & $>$10 & 1.495 & 0.453 & 0.438\\
\cline{2-7}
& 95\% & 1.000 & 1.000 & 0.997 & 0.997 & 0.997\\
\hline
$n=5000$
& Bias & 0.095 & 0.119 & 0.114 & 0.105 & 0.096\\
& SD   & 0.906 & 0.488 & 0.357 & 0.314 & 0.307\\
& RMSE & 0.911 & 0.502 & 0.375 & 0.331 & 0.321\\
\cline{2-7}
& 95\% & 0.992 & 0.973 & 0.950 & 0.946 & 0.954\\
\hline
$n=10000$
& Bias & 0.143 & 0.135 & 0.119 & 0.104 & 0.091\\
& SD   & 0.646 & 0.373 & 0.277 & 0.243 & 0.234\\
& RMSE & 0.662 & 0.397 & 0.302 & 0.264 & 0.251\\
\cline{2-7}
& 95\% & 0.947 & 0.941 & 0.933 & 0.930 & 0.938\\
\hline\hline
\end{tabular}
\caption{Simulation results for Example \ref{ex:frandsen}: regression discontinuity design. The results are displayed for the quantiles $q \in \{0.005, 0.010, 0.015, 0.020, 0.025\}$ and sample sizes $n \in \{2500, 5000, 10000\}$. Displayed for each set of simulations are the bias, standard deviation (SD), root mean square error (RMSE), and 95\% coverage frequency.}${}$
\label{tab:simlations:frandsen}
\end{table}

The results are similar to those presented in the previous subsection.
Specifically, the RMSE diminishes as the sample size $n$ increases for each quantile.
Also, observe that the coverage frequency approach gets closer to the nominal probability of 95\% as the sample size $n$ increases for each quantile.
As the sample size increase, the coverage become more accurate for the extreme quantiles.

\section{Revealing the Causal Effects of the JTPA Program for the Most Disadvantaged Individuals}

\label{sec:application} 

In this section, we present an empirical application of our proposed method of estimation and inference for extreme quantile treatment effects.
We revisit the causal inference for the job training provided by the JTPA on participants' earnings and employment outcomes studied by numerous researchers \citep[e.g.,][]{HIST1998}.
Related to our approach,
\citet{abadie2002instrumental} used the method (Abadie's Kappa) presented in Example \ref{ex:abadie} to obtain selection-adjusted quantile regression estimates.
We use the same data set as in their empirical application.
Instead of following their estimation strategy, however, we use our proposed method of estimation and inference for extreme quantiles.

The data come from the National JTPA Study, which includes information on individuals who were randomly assigned to receive the JTPA training and those who were not.
Not all the assigned individuals took the treatment, and hence the compliance is imperfect.
Nonetheless, the random assignment serves as a natural instrument for the endogenous treatment.
We are interested in the treatment effects for the outcome of log wages.
In our analysis, we use the same set of covariates, as well as the outcome, treatment, and instrument as in the original study.

\begin{figure}[t]
\centering
\includegraphics[width=0.9\textwidth]{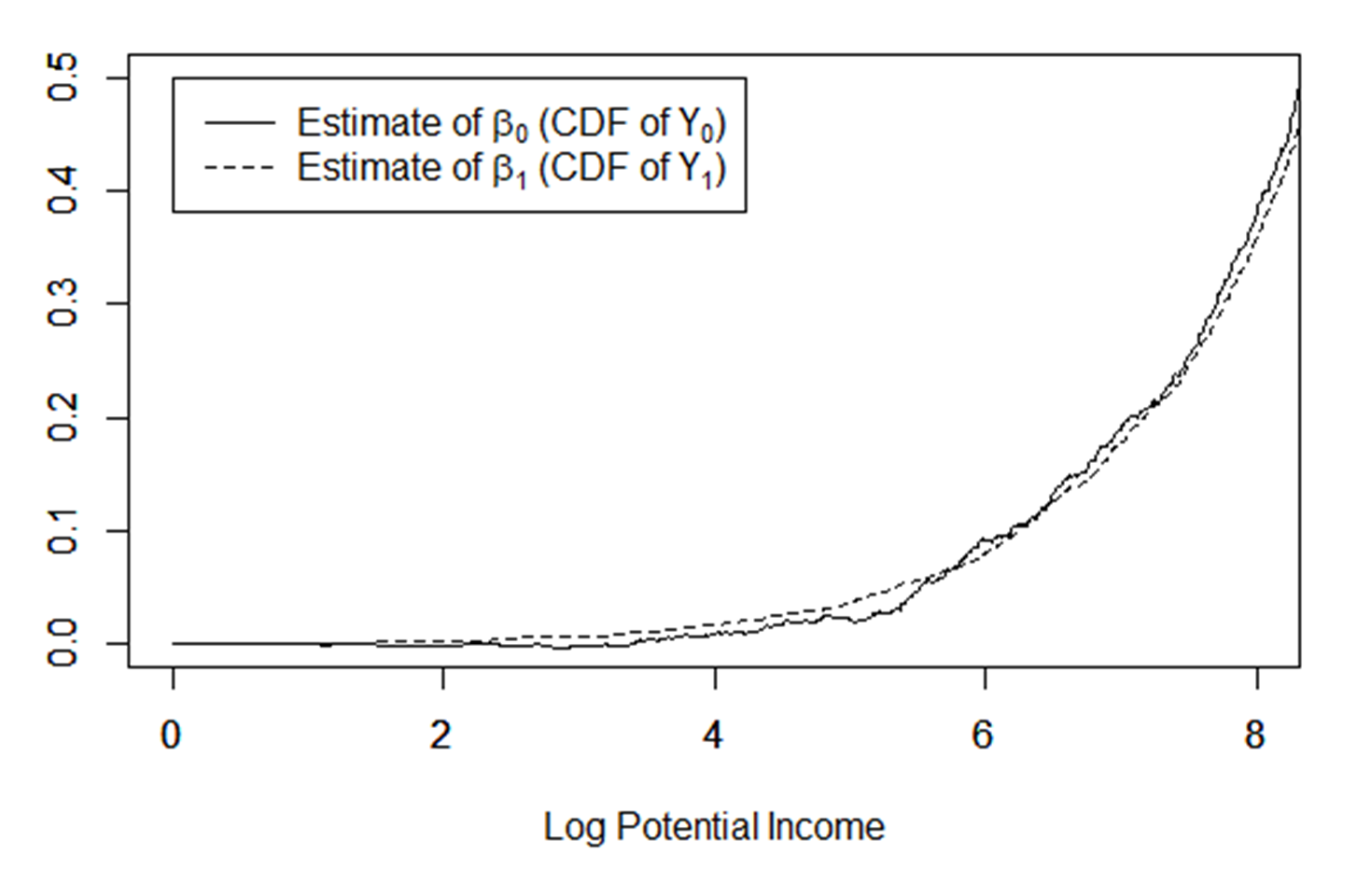}
\caption{Estimates $\hat\beta_0$ and $\hat\beta_1$ of the cumulative distribution functions $F_0$ and $F_1$ of the potential outcomes $Y_0$ and $Y_1$, respectively.}${}$
\label{fig:aii_beta_hat}
\end{figure}

Figure \ref{fig:aii_beta_hat} plots estimates, $\hat\beta_0$ and $\hat\beta_1$, of the distributions, $F_0$ and $F_1$, of the potential outcomes, $Y_0$ and $Y_1$, respectively.
Note that, unlike the empirical CDFs, these estimated CDFs do not need to be non-decreasing.
Observe that the estimate $\hat\beta_1$ dominates the estimate $\hat\beta_0$ for $q > 0.10$, implying positive quantile treatment effect estimates in these intermediate quantiles.
In contrast, the estimate $\hat\beta_0$ dominates the estimate $\hat\beta_1$ for $q < 0.05$, implying negative quantile treatment effect estimates in these extremely low quantiles.
However, the existing econometric methods of estimation and inference focusing on intermediate quantiles may not be applicable to analyzing the latter feature of these estimates.
Our proposed method can fill this gap.

Before conducting the QTE estimation, we first evaluate whether the Pareto-type tail condition (Assumption \ref{a:pareto}) is a suitable assumption for this dataset. 
Figure \ref{fig:paretofit} replicates the estimates, $\hat\beta_0$ (top) and $\hat\beta_1$ (bottom), as reported in Figure \ref{fig:aii_beta_hat}. 
Additionally, Figure \ref{fig:paretofit} also displays the Pareto fits of the corresponding distributions (depicted by gray lines), which are based on our estimates of the tail exponents, $\hat\alpha_0$ and $\hat\alpha_1$, for $\alpha_0$ and $\alpha_1$, respectively. 
By comparing the black and gray lines in each of the top and bottom panels, we can observe that the fit appears to be reasonably consistent. 

\begin{figure}
\centering
\includegraphics[width=0.9\textwidth]{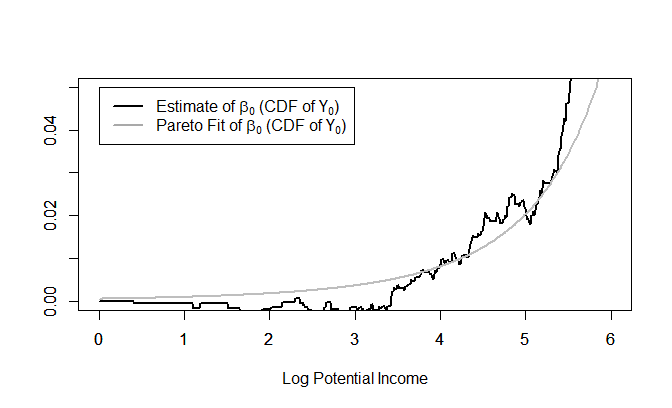}
\includegraphics[width=0.9\textwidth]{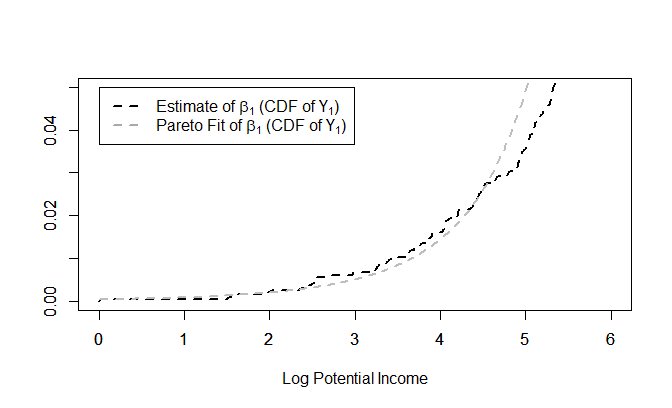}
\caption{Estimates $\hat\beta_0$ (top) and $\hat\beta_1$ (bottom) of the cumulative distribution functions $F_0$ and $F_1$ of the potential outcomes $Y_0$ and $Y_1$, respectively, together with their Pareto fits indicated by gray lines in the left tail of the distributions.}${}$
\label{fig:paretofit}
\end{figure}

We now estimate the extreme QTE using our proposed method. 
Figure \ref{fig:results} shows estimates and 95\% confidence intervals of the quantile treatment effects $QTE(q)$ for the extremely low quantiles $q \leq 0.025$.
Observe that the treatment effect is significantly \textit{negative} for each of $q \in [0.019, 0.025]$.
Hence, the job training program may exacerbate the labor outcomes for extremely disadvantaged individuals.
For even lower quantiles $q \in [0.002, 0.017]$, the point estimates are negative but the 95\% confidence intervals contain zero.

\begin{figure}
\centering
\includegraphics[width=0.9\textwidth]{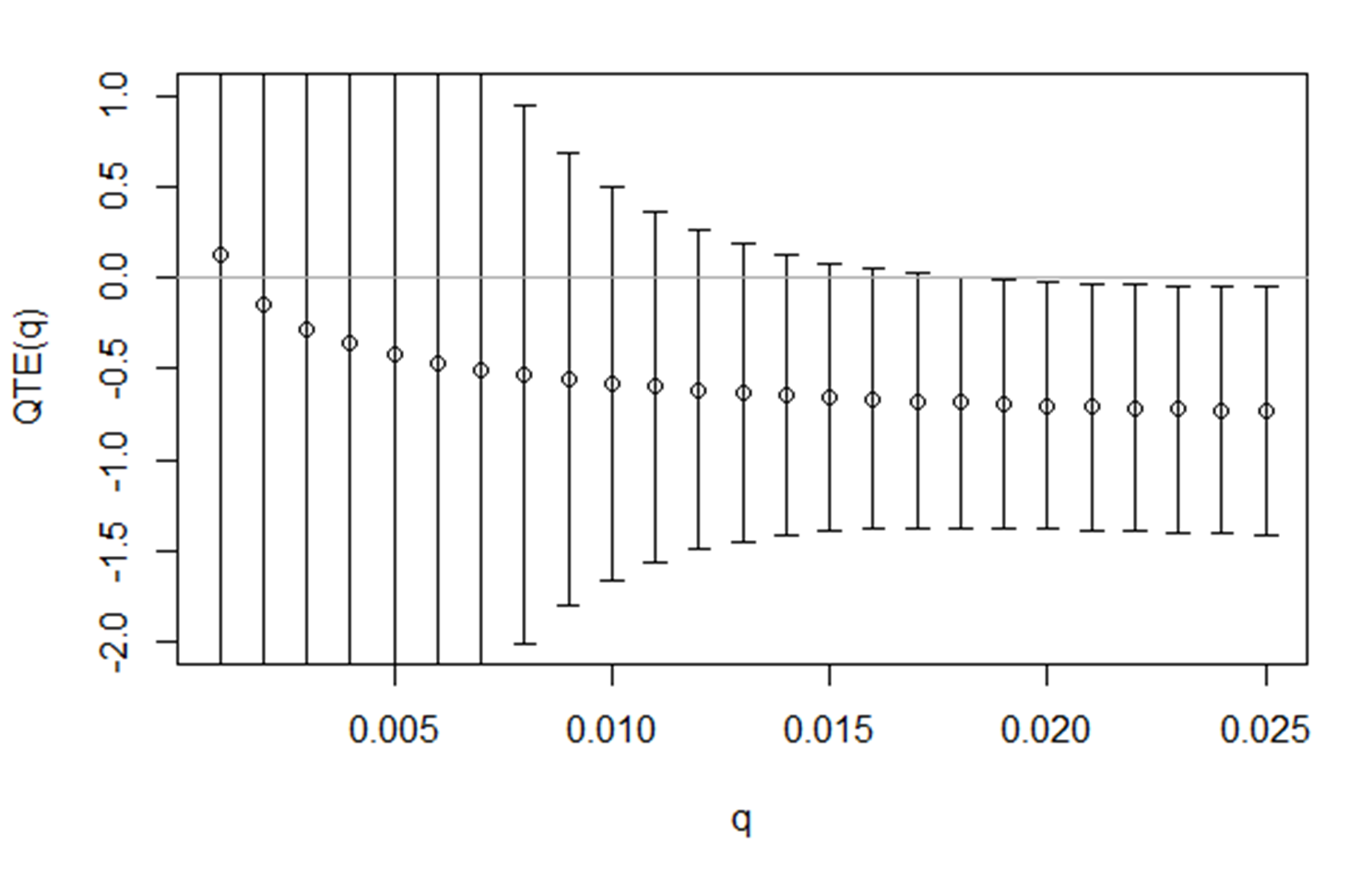}
\caption{Estimates of the $QTE(q)$ in the lower extreme quantiles and their 95\% confidence intervals. The horizontal axis measures the quantile $q$ and the vertical axis measures $QTE(q)$.}${}$
\label{fig:results}
\end{figure}

Policymakers are frequently concerned with understanding the impact of interventions on the most vulnerable populations, particularly those at the extreme lower end of the economic spectrum. Their objectives often include improving the economic well-being of the most disadvantaged individuals, who are typically represented by the lowest quantiles in a distribution.

However, traditional econometric methods have predominantly focused on intermediate quantiles, which, while useful, may not fully capture the nuances or the extent of impact on the extreme ends of the distribution. These methods, therefore, have inherent limitations when it comes to assessing treatment effects on the lowest quantiles, potentially overlooking critical insights that are essential for crafting effective policy interventions for the most at-risk groups.

In this empirical application, we demonstrate that an analysis centered on extremely low quantiles can yield surprising and sometimes counterintuitive results. Specifically, our study reveals that treatment effects, which policymakers might assume to be uniformly positive, could in fact be negative for these disadvantaged groups. This finding underscores the importance of employing methodologies that specifically target these extreme quantiles to ensure that policy decisions do not inadvertently exacerbate the conditions of those they are intended to help.


\section{Summary and Discussions}
\label{sec:conclusion} 

In this paper, we introduce a new method for the estimation and inference of extreme QTEs, which are crucial in understanding the impacts of interventions on the most vulnerable or advantaged individuals within a population. 
The proposed method is versatile and can be applied to a wide range of empirical research designs that have been developed for causal inference in the presence of endogeneity.
Leading examples are the instrumental variables design and the regression discontinuity design, which we focused on in this paper.
Our method leverages regular variation and subsampling. 
The combination of these two features ensures robust performance even in the sparsest regions of the distribution, where traditional methods might fail or produce unreliable results.
Our theoretical analysis is complemented by extensive simulation studies, which demonstrate that our method performs well under the scenarios of both the instrumental variables and regression discontinuity designs.

To illustrate the practical utility of our method, we apply it to a real-world policy evaluation: the assessment of job training programs provided by the JTPA. 
Previous studies using traditional instrumental variables estimation methods of the QTEs have largely focused on intermediate quantiles, finding positive treatment effects for individuals at the middle of the distribution. However, our method reveals a different picture when we extend the analysis to the extreme lower quantiles, representing the most disadvantaged individuals. Specifically, we find significantly negative QTEs for these individuals, suggesting that the job training program may not only fail to benefit the most vulnerable but might even have adverse effects on them. This finding contrasts sharply with the prevailing narrative in the literature, highlighting the importance of examining extreme quantiles to fully understand the impact of policy interventions across the entire distribution of outcomes.

Finally, we conclude by underscoring the theoretical contributions of this paper. Most existing methods for investigating tail features, including extreme quantiles, rely on \citeauthor{hill1975}'s (\citeyear{hill1975}) estimator or its numerous variants. See \citet{Hill2010,Hill2015}, among many others. These methods are effective when the sample is drawn directly from the target distribution, $F_j(\cdot)$, as is the case in the changes-in-changes design, where \citet{sasaki2024extreme} applied these approaches.
However, such straightforward methods are inadequate for other scenarios involving endogeneity, such as the instrumental variables design and the regression discontinuity design, where each observation cannot be uniquely associated with a specific distribution $F_j$, as illustrated in Examples \ref{ex:abadie} and \ref{ex:frandsen}. 
This limitation necessitated the development of a novel theoretical framework to study the tails of the distribution $F_j$ without direct observation of samples from $F_j$.
To our knowledge, this aspect of theoretical development is unprecedented in the literature, and it was driven by the need to address the challenges posed by specific empirical approaches that are widely used.

\vspace{1cm}
\section*{Appendix}
\appendix

The appendix collects mathematical proofs and technical details omitted from the main text.


\section{Proof of Theorem \protect\ref{thm:index}}

\label{sec:proof:index} 

\begin{proof}
Since we study $\hat{\alpha}_{j}$ separately for $j \in \{0,1\}$ in this theorem, we suppress the subscript $j$ throughout the proof to simplify notations. 
By \eqref{eq:w}, we have 
\begin{align*}
\hat{\alpha} =&-\frac{\int_{y_{\min}}^{\infty }\log \left( \frac{1-\hat{\beta}\left( y\right) }{1-\hat{\beta}\left( y_{\min}\right) }\right)
w\left( y\right) dy}{\int_{y_{\min}}^{\infty }\log \left( \frac{y}{y_{\min}}\right) w\left( y\right) dy} \\
=&-\omega ^{2}\int_{y_{\min}}^{\infty }\log \left( \frac{1-\hat{\beta}\left( y\right) }{1-\hat{\beta}\left( y_{\min}\right) }\right) \frac{y^{-\omega -1}}{y_{\min}^{-\omega }}dy.
\end{align*}
Also define
\begin{equation}\label{eq:alpha_tilde}
\tilde{\alpha}=-\omega ^{2}\int_{y_{\min}}^{\infty }\log \left( \frac{1-\beta \left( y\right) }{1-\beta \left( y_{\min}\right) }\right) \frac{y^{-\omega -1}}{y_{\min}^{-\omega }}dy,
\end{equation}
which is the infeasible estimator relying on the true CDF $\beta \left(\cdot \right) = F(\cdot)$ instead of the estimator $\hat\beta(\cdot)$. 
We decompose $\hat{\alpha}-\alpha $ as 
\begin{equation*}
\hat{\alpha}-\alpha =(\tilde{\alpha}-\alpha) + (\hat{\alpha}-\tilde{\alpha}),
\end{equation*}
where the first pair characterizes the bias and the second pair characterizes the stochastic part. 

We analyze the bias first. 
To this end, using $\beta \left( \cdot \right)=F\left( \cdot \right) $ and Assumption \ref{a:pareto}, we have
\begin{align}
\log \left( \frac{1-\beta \left( y\right) }{1-\beta \left( y_{\min}\right) }\right) 
=&\log \left( \left( \frac{y}{y_{\min}}\right) ^{-\alpha }\frac{1+dy^{-\rho }+o(y^{-\rho })}{1+dy_{\min}^{-\rho }+o(y_{\min}^{-\rho })}\right)  \nonumber\\
=&-\alpha \log \left( \frac{y}{y_{\min}}\right) +\log \left( 1+\frac{dy^{-\rho }-dy_{\min}^{-\rho }+o(y_{\min}^{-\rho })}{1+dy_{\min}^{-\rho }+o(y_{\min}^{-\rho })}\right)  \nonumber\\
=&-\alpha \log \left( \frac{y}{y_{\min}}\right) +d\left( y^{-\rho }-y_{\min}^{-\rho }\right) +o(y_{\min}^{-\rho })
\label{eq:log_1_beta}
\end{align}
for $y>y_{\min}$ as $\bar{y}_n \rightarrow \infty$.
Then, 
\begin{align*}
\tilde{\alpha}-\alpha  
=&\frac{\int_{y_{\min}}^{\infty }\left[ -\alpha \log \left( \frac{y}{y_{\min}}\right) +d\left( y^{-\rho }-y_{\min}^{-\rho }\right) +o(y_{\min}^{-\rho })\right] w\left( y\right) dy}{\int_{y_{\min}}^{\infty}-\log \left( \frac{y}{y_{\min}}\right) w\left( y\right) dy}-\alpha  \\
=&-d\omega ^{2}y_{\min}^{-\rho }\left[ \int_{y_{\min}}^{\infty}\left( \left( \frac{y}{y_{\min}}\right) ^{-\rho }-1\right) \frac{y^{-\omega -1}}{y_{\min}^{-\omega }}dy\right] \left( 1+o(1)\right)  \\
=&-d\omega ^{2}y_{\min}^{-\rho }\left[ \int_{1}^{\infty }\left( s^{-\rho}-1\right) s^{-\omega -1}ds\right] \left( 1+o(1)\right)  \\
=&-d\omega ^{2}y_{\min}^{-\rho}\left( \frac{1}{\rho +\omega }-\frac{1}{\omega }\right) \left(1+o(1)\right)\\
=&-B_n(1 + o(1)),
\end{align*}
where the first and second equalities are due to \eqref{eq:w}, \eqref{eq:w2}, \eqref{eq:alpha_tilde}, and \eqref{eq:log_1_beta},
the third equality is due to a change of variables, and
the fourth equality holds under Assumption \ref{a:y_bar}.

Next, we analyze the asymptotic behavior of the stochastic part, $\hat{\alpha}-\tilde{\alpha}$. 
Observe that
\begin{align}
\sup_{y \in [\bar{y}_n, \infty)} \left\vert \frac{\vert \hat\beta(y)-\beta(y) \vert }{y^{-\alpha}} \right\vert
\leq & \sup_{y \in \mathbb{R}} \left( \tilde{r}_n \frac{|\hat\beta(y)-\beta(y) |}{\Xi(y,y)^{1/2}}  \right) \left(  \tilde{r}_n^{-1} \sup_{y \in [\bar{y}_n, \infty) } y^{\alpha} \Xi(y,y)^{1/2} \right) \nonumber\\
 \leq & O_p(1) \times o(1),
 \label{eq:uniform_bound}
\end{align}
where the last line follows from Assumptions \ref{a:xi} and \ref{a:y_bar}.

Now, rewrite $\hat{\alpha}-\tilde{\alpha}$ as
\begin{align*}
\hat{\alpha}-\tilde{\alpha} 
=&\frac{-\int_{y_{\min}}^{\infty }\left[\log \left( \frac{1-\hat{\beta}\left( y\right) }{1-\beta \left( y\right) }\right) -\log \left( \frac{1-\hat{\beta}\left( y_{\min}\right) }{1-\beta\left( y_{\min}\right) }\right) \right] w\left( y\right) dy}{\int_{y_{\min}}^{\infty }\log \left( \frac{y}{y_{\min}}\right) w\left( y\right) dy}\\
=&\frac{-\int_{y_{\min}}^{\infty }\left[ \log \left( 1 - \frac{ \hat\beta(y)-\beta(y)}{1-\beta \left( y\right) }\right) -\log \left( 1 - \frac{\hat\beta(\bar y_n)-\beta(\bar y_n) }{1-\beta \left( y_{\min}\right) }\right) \right] w\left( y\right) dy}{\int_{y_{\min}}^{\infty }\log \left( \frac{y}{y_{\min}}\right) w\left( y\right) dy} \\
=&\frac{r_{n}^{-1}\int_{y_{\min}}^{\infty }\left( \frac{r_n(\hat\beta(y)-\beta(y))}{ cy^{-\alpha} }-\frac{r_n(\hat\beta(\bar y_n) - \beta(\bar y_n)) }{ c \bar{y}_n^{-\alpha} }\right) w\left( y\right) dy}{\int_{y_{\min}}^{\infty }\log \left( \frac{y}{y_{\min}}\right) w\left( y\right) dy}\left( 1+o_{p}(1)\right),
\end{align*}
where the last equality uses \eqref{eq:uniform_bound} and the fact that $\log \left( 1+x\right) =x+o(x)$ as $x\rightarrow 0$. 
 
Note that Assumption \ref{a:beta_hat} has
$$
r_{n}( \hat{\beta}\left( \cdot \right) -\beta\left( \cdot \right) ) \Rightarrow Z_{j}\left( \cdot \right),
$$
and
$$
\int_{y_{\min}}^{\infty }\log \left( \frac{y}{y_{\min}}\right) \frac{y^{-\omega -1}}{y_{\min}^{-\omega }}dy=\frac{1}{\omega ^{2}}
$$
holds by \eqref{eq:w}--\eqref{eq:w2}.
Therefore,
\begin{equation*}
A_{n}^{-1/2}\left( \hat{\alpha}-\tilde{\alpha}\right) \overset{d}{\rightarrow }\mathcal{N}\left( 0,1\right)
\end{equation*}
follows. 

Combine the above results to obtain
\begin{equation*}
A_{n}^{-1/2}\left( \hat{\alpha}-\alpha +B_{n}\right) \overset{d}{\rightarrow }\mathcal{N}\left( 0,1\right),
\end{equation*}
which completes the proof.
\end{proof}

\section{Proof of Corollary \protect\ref{col:qte}}\label{sec:col:qte}

\label{sec:proof:corollary1} 

\begin{proof}
Let us denote the two terms of the QTE estimator as
\begin{eqnarray*}
\widehat{QTE}\left( q\right) &=&y_{\min ,1}\left( \frac{1-\hat{\beta}_{1}\left( y_{\min ,1}\right) }{1-q}\right) ^{1/\hat{\alpha}_{1}}-y_{\min,0}\left( \frac{1-\hat{\beta}_{0}\left( y_{\min ,0}\right) }{1-q}\right) ^{1/\hat{\alpha}_{0}} \\
&\equiv &\hat{F}_{1}^{-1}\left( 1-q\right) -\hat{F}_{0}^{-1}\left(1-q\right) ,
\end{eqnarray*}
where $\hat{\beta}_{j}\left( \cdot \right) =\hat{F}_{j}\left(\cdot \right) $ and $\beta _{j}\left( \cdot \right) =F_{j}\left( \cdot\right) $. 
First we are going to show that
\begin{equation}
\left( 
\begin{array}{c}
\frac{A_{1n}^{-1/2}}{\log d_{1n}}\left( \frac{\hat{F}_{1}^{-1}\left(1-q\right) }{F_{1}^{-1}\left( 1-q\right) }-1\right) \\ 
\frac{A_{0n}^{-1/2}}{\log d_{0n}}\left( \frac{\hat{F}_{0}^{-1}\left(1-q\right) }{F_{0}^{-1}\left( 1-q\right) }-1\right)%
\end{array}%
\right) \overset{d}{\rightarrow }\mathcal{N}\left( 0,I_{2}\right) .
\label{eq:conv:quan}
\end{equation}%
Define 
\begin{equation*}
\hat{d}_{jn}=\frac{1-\hat{\beta}_{j}\left( y_{\min ,j}\right) }{q}\text{ and }d_{jn}=\frac{1-\beta _{j}\left( y_{\min ,j}\right) }{q}
\end{equation*}
for $j=0,1$.
Then we have
\begin{eqnarray*}
\left\vert \frac{\hat{d}_{jn}}{d_{jn}}\right\vert &\leq &\sup_{z}\left\vert \frac{1-\hat{\beta}_{j}\left( z\right) }{1-\beta _{j}\left( z\right) }\right\vert \leq 1+\sup_{z\in \mathbb{R}}\left\vert \frac{\hat{\beta}_{j}\left( z\right) -\beta _{j}\left( z\right) }{1-\beta _{j}\left( z\right) }\right\vert \\
&&\overset{(\ast )}{\leq }1+C\sup_{z\in \mathbb{R}}\left\vert \frac{\Xi _{j}\left( z,z\right) ^{1/2}\tilde{r}_{n}^{-1}}{z^{-\alpha _{j}}}\right\vert \rightarrow 1,
\end{eqnarray*}%
where the inequality marked by ($\ast $) follows from Assumptions \ref{a:beta_hat}--\ref{a:xi}, and the last convergence follows from Assumption \ref{a:y_bar}.

Now, substitute $\hat{d}_{nj}$ into $\hat{F}_{j}^{-1}\left( 1-q\right) $ to obtain
\begin{eqnarray*}
&&\frac{A_{jn}^{-1/2}}{\log d_{jn}}\left( \frac{\hat{F}_{j}^{-1}\left(1-q\right) }{F_{j}^{-1}\left( 1-q\right) }-1\right) \\
&=&\frac{d_{jn}^{1/\alpha _{j}}F_{j}^{-1}\left( \hat{F}_{j}(y_{\min,j})\right) }{F_{j}^{-1}\left( 1-q\right) }\left( \frac{A_{jn}^{-1/2}}{\log d_{jn}}\left( \frac{y_{\min ,j}}{F_{j}^{-1}\left( \hat{F}_{j}(y_{\min,j})\right) }-1\right) \hat{d}_{jn}^{1/\hat{\alpha}_{j}-1/\alpha _{j}}\right.\\
&&\left. +\frac{A_{jn}^{-1/2}}{\log d_{jn}}\left( \hat{d}_{jn}^{1/\hat{\alpha}_{j}-1/\alpha _{j}}-1\right) -\frac{A_{jn}^{-1/2}\hat{F}_{j}(y_{\min,j})^{-\rho _{j}}}{\log d_{jn}}\frac{\hat{d}_{jn}^{-1/\alpha _{j}}\frac{F_{j}^{-1}\left( 1-q\right) }{F_{j}^{-1}\left( \hat{F}_{j}(y_{\min,j})\right) }-1}{\hat{F}_{j}(y_{\min ,j})^{-\rho _{j}}}\right) \\
&\equiv &\Delta _{1n}\left( \Delta _{2n}+\Delta _{3n}-\Delta _{4n}\right) ,
\end{eqnarray*}%
where $A_{jn}$ is defined as in Theorem 1. We study each of the components, $\Delta _{1n},...,\Delta_{4n}$, one-by-one. 

For $\Delta _{1n}$, Assumption \ref{a:pareto} implies
\begin{equation*}
F_{j}^{-1}\left( t\right) \sim C_j \left( 1-t\right) ^{-1/\alpha _{j}}
\end{equation*}
for some constant $C_j>0$ as $t\rightarrow 1$. Thus,
\begin{equation*}
\Delta _{1n}\sim d_{jn}^{1/\alpha _{j}}\frac{\left( 1-\hat{F}_{j}(y_{\min,j})\right) ^{-1/\alpha _{j}}}{q^{-1/\alpha _{j}}}\overset{p}{\rightarrow }1.
\end{equation*}

For $\Delta _{2n}$, we have that 
\begin{align*}
& \text{ \ \ \ }\frac{y_{\min ,j}}{F_{j}^{-1}\left( \hat{F}_{j}(y_{\min,j})\right) }-1 \\
& \overset{(1)}{\sim }\frac{1}{f\left( F_{j}^{-1}\left( \hat{F}_{j}(y_{\min,j})\right) \right) }\frac{F_{j}\left( y_{\min ,j}\right) -\hat{F}%
_{j}(y_{\min ,j})}{F_{j}^{-1}\left( \hat{F}_{j}(y_{\min ,j})\right) } \\
& \overset{(2)}{=}\frac{1}{f\left( y_{\min ,j}\right) y_{\min ,j}}O_{p}\left( \Xi _{j}\left( y_{\min ,j},y_{\min ,j}\right) ^{1/2}\tilde{r}_{n}^{-1}\right) \\
& \overset{(3)}{=}O_{p}\left( y_{\min ,j}^{-\alpha _{j}}\Xi _{j}\left(y_{\min ,j},y_{\min ,j}\right) ^{1/2}\tilde{r}_{n}^{-1}\right) \\
& \overset{(4)}{=}o_{p}(1),
\end{align*}%
where the order equivalence (1) is due to the delta method; the equality (2) is by Assumption \ref{a:beta_hat}; the equality(3) is by Assumptions \ref{a:xi} and \ref{a:pareto}; and the equality (4) is by Assumption \ref{a:y_bar}. Furthermore, $\hat{d}_{jn}^{1/\hat{\alpha}_{j}-1/\alpha _{j}}\overset{p}{\rightarrow }1$ by the following argument for $\Delta _{3n}$. Then, $\Delta _{2n}=o_{p}(1)$ follows from $\log d_{jn}\rightarrow \infty $.

For $\Delta _{3n}$, by taking logarithm and the fact that $\exp x\sim 1+x$ for $x\rightarrow 1$, we obtain 
\begin{equation*}
\frac{A_{jn}^{-1/2}}{\log d_{jn}}\left( \hat{d}_{jn}^{1/\hat{\alpha}_{j}-1/\alpha _{j}}-1\right) \sim A_{jn}^{-1/2}\left( \frac{1}{\hat{\alpha}_{j}}-\frac{1}{\alpha _{j}}\right) \overset{d}{\rightarrow }\mathcal{N}\left( 0,1\right),
\end{equation*}%
where the convergence follows from the delta method, Theorem \ref{thm:index}, and the assumption that $A_{jn}^{-1/2}B_{jn}\rightarrow 0$.

For $\Delta _{4n}$, by Remark 3.2.7 in de Haan and Ferreira (2007), our Assumption \ref{a:pareto} satisfies the second-order condition as in equation (3.2.4) of \citet{de2006extreme}. 
In particular, their $A\left( t\right) \sim t^{-\rho _{j}}$ for $j=0,1$. Therefore, the second term in $\Delta _{4n}$ satisfies
\begin{equation*}
\frac{\hat{d}_{jn}^{-1/\alpha _{j}}\frac{F_{j}^{-1}\left( 1-q\right) }{F_{j}^{-1}\left( \hat{F}_{j}(y_{\min ,j})\right) }-1}{\hat{F}_{j}(y_{\min,j})^{-\rho _{j}}}=O(1)\text{.}
\end{equation*}%
Furthermore, $$A_{jn}^{-1/2}\hat{F}_{j}(y_{\min ,j})^{-\rho _{j}}=o_{p}(1)$$ provided that $A_{jn}^{-1/2}B_{jn}\rightarrow 0$. 
Thus, $\Delta_{4n}=o_{p}(1) $ follows from $\log d_{jn}\rightarrow \infty $.

Now, \eqref{eq:conv:quan} is established by combining the above bounds for $\Delta _{1n},...,\Delta _{4n}$. 
Recall the notation
\begin{equation*}
\lambda _{jn}=\frac{A_{jn}^{-1/2}}{F_{j}^{-1}\left( 1-q\right) \log d_{jn}},\text{ and }\underline{\lambda }_{n}=\min \{\lambda _{1n},\lambda _{0n}\}.
\end{equation*}
It follows from \eqref{eq:conv:quan} and the continuous mapping theorem that%
\begin{eqnarray*}
&&\underline{\lambda }_{n}\left( \widehat{QTE}\left( q\right) -QTE\left(q\right) \right) \\
&=&\frac{\underline{\lambda }_{n}}{\lambda _{1n}}\lambda _{1n}\left( \hat{F}_{1}^{-1}\left( 1-q\right) -F_{1}^{-1}\left( 1-q\right) \right) -\frac{\underline{\lambda }_{n}}{\lambda _{0n}}\lambda _{0n}\left( \hat{F}_{0}^{-1}\left( 1-q\right) -F_{0}^{-1}\left( 1-q\right) \right) \\
&=&\left( \frac{\underline{\lambda }_{n}}{\lambda _{1n}}A_{1n}^{-1/2}\left(\alpha _{1}^{-1}-\alpha _{1}^{-1}\right) -\frac{\underline{\lambda }_{n}}{\lambda _{0n}}A_{0n}^{-1/2}\left( \alpha _{0}^{-1}-\alpha _{0}^{-1}\right)\right) \left( 1+o_{p}(1)\right) \\
&\overset{d}{\rightarrow }&\mathcal{N}\left( 0,\Omega _{Q}\right) ,
\end{eqnarray*}%
where%
\begin{equation}\label{eq:Omega_Q}
\Omega _{Q}=\left( \lim_{n\rightarrow \infty }\frac{\underline{\lambda }_{n}}{\lambda _{1n}}\right) ^{2}+\left( \lim_{n\rightarrow \infty }\frac{\underline{\lambda }_{n}}{\lambda _{2n}}\right) ^{2}-2\lim_{n\rightarrow \infty }Cov\left( A_{1n}^{-1/2}\left( \hat{\alpha}_{1}^{-1}-\alpha_{1}^{-1}\right) ,A_{0n}^{-1/2}\left( \hat{\alpha}_{0}^{-1}-\alpha_{0}^{-1}\right) \right).
\end{equation}
It remains to simplify the covariance term. 
To this end, recall that in the proof of Theorem \ref{thm:index} with $A_{jn}^{-1/2}B_{jn}=o(1)$ for $j=0,1$, we have
\begin{align*}
\hat{\alpha}_j-\alpha_j = \omega^{2} r_{n}^{-1}\int_{y_{\min,j}}^{\infty }\left( \frac{ Z_j (y) }{ c_j y^{-\alpha_j} }-\frac{ Z_j (y_{\min,j}) }{ c_j y_{\min,j}^{-\alpha_j} }\right) w_j\left( y\right) dy    \left( 1+o_{p}(1)\right). 
\end{align*}
Then the delta method implies
\small
\begin{align*}
& Cov\left( \left( \hat{\alpha}_{1}^{-1}-\alpha_{1}^{-1}\right) , \left( \hat{\alpha}_{0}^{-1}-\alpha_{0}^{-1}\right) \right) \sim \;  \omega^{4} r_{n}^{-2} \times  \\
& Cov\left( \alpha_1^{-2} \int_{y_{\min,1}}^{\infty }\left( \frac{ Z_1 (y) }{ c_1 y^{-\alpha_1} }-\frac{ Z_1 (y_{\min,1}) }{ c_1 y_{\min,1}^{-\alpha_1} }\right) w_1\left( y\right) dy , 
                \alpha_0^{-2} \int_{y_{\min,0}}^{\infty }\left( \frac{ Z_0 (y) }{ c_0 y^{-\alpha_0} }-\frac{ Z_0 (y_{\min,0}) }{ c_0 y_{\min,0}^{-\alpha_0} }\right) w_0\left( y\right) dy \right).
\end{align*}
\normalsize
This completes a proof of the corollary.
\end{proof}

\section{Proof of Corollary \protect\ref{col:subsampling}}\label{sec:col:subsampling}
\label{sec:proof:corollary2} 
\begin{proof}
The proof follows from Theorem 2.2.1 in \citet{politis1999subsampling}. 
In particular, the limiting distribution $L^{\ast}$ is continuous and the subsampling size $b_n$ satisfies $b_n\rightarrow\infty$ and $b_n/n\rightarrow 0$. 
\end{proof}


\section{Details of Example \ref{ex:abadie}}

\label{sec:abadie_weak_convergence}

This section revisits Example \ref{ex:abadie} and establishes lower-level sufficient conditions for the weak convergence \eqref{eq:weak_abadie}.

Write $W=(Y,D,Z,X^{\prime })$ and define
\begin{align*}
b_0^N(y;W) =& 1\{Y < y \} \cdot \frac{(1-D)(p(X)-Z)}{(1-p(X))p(X)},
\\
b_1^N(y;W) =& 1\{Y < y \} \cdot \frac{D(Z-p(X))}{(1-p(X))p(X)},
\qquad\text{and}\\
b^D(W) =& 1 - \frac{D(1-Z)}{1-p(X)} - \frac{(1-D)Z}{p(X)}.
\end{align*}
Observe that we can write $$\beta_j(y) = E[b_j^N(y;W)]/E[b^D(W)]$$ for each $j \in \{0,1\}$.

Suppose that $p(X)$ is parameterized by $p(X;\gamma)$, and let $\hat\gamma$ denote an estimate of $\gamma$.
We can then estimate $b_0^N$, $b_1^N$, and $b^D$ by
\begin{align*}
\hat b_0^N(y;W) =& 1\{Y < y \} \cdot \frac{(1-D)(p(X;\hat\gamma)-Z)}{(1-p(X;\hat\gamma))p(X;\hat\gamma)},
\\
\hat b_1^N(y;W) =& 1\{Y < y \} \cdot \frac{D(Z-p(X;\hat\gamma))}{(1-p(X;\hat\gamma))p(X;\hat\gamma)},
\qquad\text{and}\\
\hat b^D(W) =& 1 - \frac{D(1-Z)}{1-p(X;\hat\gamma)} - \frac{(1-D)Z}{p(X;\hat\gamma)},
\end{align*}
respectively.
Thus,  $\beta_j(y)$ is estimated by
\begin{align*}
\widehat\beta_j(y) = \frac{\sum_{i=1}^n \hat b_j^N(y;W)}{\sum_{i=1}^n \hat b^D(W)}
\end{align*}
for each $j \in \{0,1\}$.
In this setup, we impose the following assumption.

\begin{assumption}\label{a:abadie}
(i)
$p(\cdot)$ is bounded away from zero and one on the support of $X$.
(ii)
$p(X) = p(X;\gamma)$.
(iii)
$p(X;\gamma)$ is twice continuously differentiable with respect to $\gamma$ with uniformly bounded derivatives.
(iv)
$\sqrt{n}\left(\hat\gamma-\gamma\right) = \frac{1}{\sqrt{n}}\sum_{i=1}^n \phi_\gamma(W_i) + o_p(1)$ for a measurable influence function $\phi_\gamma$ such that $E[\phi_\gamma(W)]=0$, $E[\phi_\gamma^2(W)]<\infty$, and $E[\phi_\gamma(W)^2 \cdot 1\{|\phi_\gamma(W)| > \eta \sqrt{n}\}] \rightarrow 0$ for every $\eta > 0$.
\end{assumption}

For instance, the maximum likelihood estimate $\hat\gamma = \arg \max_\gamma \sum_{i=1}^n \Big (Z_i \log\Lambda(X_i'\gamma) + (1-Z_i) \log(1-\Lambda(X_i'\gamma)) \Big)$ of the logit model $p(X;\gamma) = \Lambda(X'\gamma) = \exp(X'\gamma)/(1 + \exp(X'\gamma))$ yields the influence function
$\phi_\gamma(W) = E[X \Lambda(X'\Gamma)(1-\Lambda(X'\Gamma))X']^{-1} X(Z-\Lambda(X'\Gamma))$.
In this case, Assumption \ref{a:abadie} is satisfied provided that $X$ has a bounded $(2+\delta)$-th moment for some $\delta>0$.

In light of Assumption \ref{a:abadie} (iii), we let $\partial_{\gamma'} p(x;\gamma)$ denote the derivative of $p(x;\gamma)$ with respect to $\gamma'$, i.e., it is a row vector.
For instance, in the case of the logit model, the derivative is
$
\partial_{\gamma'} p(x;\gamma)
=
\Lambda'(x'\gamma) x'
=
\frac{e(x'\gamma)}{(1 + e(x'\gamma))^2} x'.
$

Let the derivatives of $b_j^N(y;W)$ for $j \in \{0,1\}$ with respect to $p(X)$ be denoted by
\begin{align*}
\partial_p b_0^N(y;W) =& 1\{Y < y \} \cdot (1-D) \cdot \Big(\frac{p(X)^2-Z(1-2p(X))}{(1-p(X))^2p(X)^2}\Big)
\qquad\text{and}\\
\partial_p b_1^N(y;W) =& 1\{Y < y \} \cdot D \cdot \Big(\frac{p(X)^2-Z(1-2p(X))}{(1-p(X))^2p(X)^2}\Big).
\end{align*}
Then, the influence function of
$$
\sqrt{n}\left(\frac{1}{n}\sum_{i=1}^n \hat b_j^N(y;W_i) - E[b_j^N(y;W)]\right) 
$$
can be written as
\begin{align*}
\psi_j^N(y;W_i) =  b_j^N(y;W_i) + E[\partial_p b_j^N(y;W) \partial_{\gamma'}p(X;\gamma)] \cdot \phi_\gamma(W_i)
\end{align*}
for each $ j \in \{0,1\}$, as formally established in the proof of Proposition \ref{prop:abadie}.
Similarly, let the derivative of $b^D(W)$ with respect to $p(X)$ be denoted by
\begin{align*}
\partial_p b^D(W) =& -\frac{D(1-X)}{(1-p(X))^2} + \frac{(1-D)Z}{p(X)^2}.
\end{align*}
Then, the influence function of
$$
\sqrt{n} \left( \frac{1}{n}\sum_{i=1}^N \hat b^D(W_i) - E\left[ b^D(W) \right] \right) =\mathbb{G}_n\psi^D(W) + o_p(1),
$$
can be written as
\begin{align*}
\psi^D(W_i) =  b^D(W_i) + E[\partial_p b^D(W) \partial_{\gamma'}p(X;\gamma)] \cdot \phi_\gamma(W_i),
\end{align*}
as formally established in the proof of Proposition \ref{prop:abadie}.
Thus, we obtain the following weak convergence result.

\begin{proposition}\label{prop:abadie}
Suppose that $\{W_i\}_i$ are i.i.d. copies of $W=(Y,D,Z,X^{\prime })$. Then,
\begin{align*}
\sqrt{n}\left( \widehat \beta _j(\cdot) - \beta_j(\cdot) \right) \Rightarrow \frac{1}{E[b^D(W)]} \zeta_j^N(\cdot) - \frac{E[b_j^N(\cdot;W)]}{E[b^D(W)]^2} \zeta^D,
\end{align*}
where $(\zeta_j^N(\cdot),\zeta^D)$ follows the zero-mean Gaussian process with covariance function
\begin{align*}
Cov(\zeta_j^N(y),\zeta_j^N(y')) =& E[\psi_j^N(y;W)\psi_j^N(y';W)] - E[\psi_j^N(y;W)] \cdot E[\psi_j^N(y';W)],
\\
Cov(\zeta_j^N(y),\zeta^D) =& E[\psi_j^N(y;W)\psi^D(W)] - E[\psi_j^N(y;W)] \cdot E[\psi^D(W)],
\end{align*}
$j \in \{0,1\}$, and $Var(\zeta^D) = Var(\psi^D(W))$.
\end{proposition}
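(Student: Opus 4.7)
The plan is to prove Proposition \ref{prop:abadie} in three stages: (i) linearize the numerator and denominator around $\gamma$, (ii) establish uniform-in-$y$ weak convergence via an empirical-process CLT, and (iii) apply the functional delta method to the ratio. Throughout, write $\mathbb{G}_n f = \sqrt{n}(n^{-1}\sum_i f(W_i) - E[f(W)])$.

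First, I would linearize $\hat b_j^N(y;W_i)$ around the true $p(X_i;\gamma)$ using a second-order Taylor expansion in $\gamma$. Under Assumption \ref{a:abadie}(i)--(iii), the map $\gamma \mapsto \hat b_j^N(y;W_i)$ is twice continuously differentiable with derivatives uniformly bounded in $(W_i,y)$, so
\begin{equation*}
\hat b_j^N(y;W_i) - b_j^N(y;W_i) = \partial_p b_j^N(y;W_i)\,\partial_{\gamma'}p(X_i;\gamma)(\hat\gamma-\gamma) + R_{j,i}(y),
\end{equation*}
with $\sup_{y,i}|R_{j,i}(y)| = O_p(\|\hat\gamma-\gamma\|^2) = O_p(n^{-1})$ by Assumption \ref{a:abadie}(iv). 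Averaging kills the remainder at rate $\sqrt{n}$. A uniform law of large numbers applied to $\{\partial_p b_j^N(y;\cdot)\partial_{\gamma'}p(\cdot;\gamma):y\in\mathbb{R}\}$, which is Glivenko-Cantelli as a VC class of indicators times bounded envelopes, yields
\begin{equation*}
\tfrac{1}{n}\sum_i \partial_p b_j^N(y;W_i)\partial_{\gamma'}p(X_i;\gamma) = E[\partial_p b_j^N(y;W)\partial_{\gamma'}p(X;\gamma)] + o_p(1)
\end{equation*}
uniformly in $y$. Substituting the influence-function representation of $\hat\gamma$ from Assumption \ref{a:abadie}(iv) gives $\sqrt{n}(n^{-1}\sum_i \hat b_j^N(y;W_i) - E[b_j^N(y;W)]) = \mathbb{G}_n \psi_j^N(y;\cdot) + o_p(1)$ uniformly in $y$. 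The same argument with scalar $y$-independent terms delivers $\sqrt{n}(n^{-1}\sum_i \hat b^D(W_i) - E[b^D(W)]) = \mathbb{G}_n\psi^D + o_p(1)$.

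Second, I would verify that the class $\mathcal{F}_j = \{\psi_j^N(y;\cdot):y\in\mathbb{R}\}\cup\{\psi^D\}$ is $P$-Donsker. The indicator subclass $\{\mathbf{1}\{Y<y\}\}$ is VC with a constant envelope; the plug-in weights $\frac{(1-D)(p(X)-Z)}{(1-p(X))p(X)}$ and $\frac{D(Z-p(X))}{(1-p(X))p(X)}$ are uniformly bounded by Assumption \ref{a:abadie}(i), so the $b_j^N(y;\cdot)$ part is Donsker by the preservation results for products with bounded functions (van der Vaart and Wellner, 1996). The correction $E[\partial_p b_j^N(y;W)\partial_{\gamma'}p(X;\gamma)]\cdot\phi_\gamma(W_i)$ is a single square-integrable function $\phi_\gamma$ multiplied by a deterministic, uniformly bounded, continuous function of $y$, so it contributes a tight addition. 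Joint weak convergence $(\mathbb{G}_n\psi_j^N(\cdot;\cdot),\mathbb{G}_n\psi^D)\Rightarrow(\zeta_j^N(\cdot),\zeta^D)$ with the stated covariance then follows from the functional CLT for Donsker classes, and the covariance calculations are standard second-moment evaluations.

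Third, I would apply the functional delta method to the Hadamard-differentiable map $(f,c)\mapsto f/c$ at $(E[b_j^N(\cdot;W)],E[b^D(W)])$, which is tangentially differentiable because Assumption \ref{a:abadie}(i) makes $E[b^D(W)]$ strictly positive (it equals the complier share) and scalar. The derivative in direction $(h_N,h_D)$ equals $h_N/E[b^D(W)] - (E[b_j^N(\cdot;W)]/E[b^D(W)]^2)\,h_D$, so composition with the joint weak convergence above produces exactly the limit process claimed, with the covariance structure following from the bilinearity of covariance.

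The principal obstacle is the uniform-in-$y$ handling of the linearization remainder and the cross-term $n^{-1}\sum_i \partial_p b_j^N(y;W_i)\partial_{\gamma'}p(X_i;\gamma)$; both reduce to empirical-process arguments for VC classes of indicators times bounded smooth weights, so Assumption \ref{a:abadie}(i) (bounded propensity score) and (iii) (smoothness of $p(\cdot;\gamma)$) do the decisive work, while Assumption \ref{a:abadie}(iv) ensures the influence-function substitution for $\hat\gamma$ is $o_p(1)$-controlled.
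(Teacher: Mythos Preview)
Your proposal is correct and follows essentially the same three-stage architecture as the paper's proof: linearize the numerator and denominator to obtain the influence-function representations $\mathbb{G}_n\psi_j^N(y;\cdot)+o_p(1)$ and $\mathbb{G}_n\psi^D+o_p(1)$, establish a uniform CLT for the joint process, and finish with the functional delta method applied to the ratio map. The only cosmetic difference is that the paper verifies the conditions of Theorem 2.11.22 in \citet{van1996weak} directly (envelope, measurability, $L_2$-equicontinuity, and a VC-index computation), whereas you invoke Donsker preservation results for products of VC indicators with bounded weights; both routes are standard and yield the same conclusion.
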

\begin{proof}
We first focus on the numerator of the estimator.
We can write
\begin{align*}
&
\sqrt{n}\left( \frac{1}{n}\sum_{i=1}^n \hat b_j^N(y;W_i) - E[b_j^N(y;W)] \right)
\\
=&
\frac{1}{\sqrt{n}}\sum_{i=1}^n (\hat b_j^N(y;W_i) - b_j^N(y;W_i))
+
\mathbb{G}_n b_j^N(y;W)
\\
=&
\left(E[ \partial_p b_j^N(y;W) \partial_{\gamma'} p(X;\gamma) ] + o_p(1)\right) \cdot \sqrt{n} (\hat \gamma - \gamma)
+
\mathbb{G}_n b_j^N(y;W)
\\
=&
\left(E[ \partial_p b_j^N(y;W) \partial_{\gamma'} p(X;\gamma) ] + o_p(1)\right) \cdot \left( \mathbb{G}_n \phi_\gamma(W) + o_p(1)\right)
+
\mathbb{G}_n b_j^N(y;W),
\end{align*}
where
the second equality is due to Assumption \ref{a:abadie} (i)--(iii), and
the third equality is due to Assumption \ref{a:abadie} (iv).
Therefore, we obtain
\begin{align*}
\sqrt{n}\left(\frac{1}{n}\sum_{i=1}^n \hat b_j^N(y;W_i) - E[b_j^N(y;W)]\right) =& \mathbb{G}_n \psi_j^N(y;W) + o_p(1),
\end{align*}
for $j \in \{0,1\}$.
Similarly, we have
\begin{align*}
&\sqrt{n} \left( \frac{1}{n}\sum_{i=1}^N \hat b^D(W_i) - E\left[ b^D(W) \right] \right) =\mathbb{G}_n\psi^D(W) + o_p(1)
\end{align*}
for the denominator of the estimator.

Now, we are going to establish a weak convergence of $\mathbb{G}_n (\psi_j^N(\cdot; W), \psi^D(W))$ to the zero-mean Gaussian process $(\zeta_j^N(\cdot),\zeta^D)$ through four steps.
First, by Assumption \ref{a:abadie} (i), there exists $\underline p > 0$ such that $\underline{p} \leq p(\cdot) \leq 1-\underline{p}$.
By Assumption \ref{a:abadie} (iii), there exists $\overline q < \infty$ such that $\partial_{\gamma'}p(\cdot;\cdot) \leq \overline q$.
Define the function $$\overline \psi (\cdot) = 1+{\underline p}^{-1} (1- {\underline p})^{-1} + {\underline p}^{-2} (1- {\underline p})^{-2} \overline q \phi_\gamma(\cdot).$$
This $\overline \psi$ is an envelop of $\Psi_j^N = \{\psi_j^N(y;\cdot) : y \in \mathbb{R}\}$ such that $P^\ast (\Psi_j^N)^2 < \infty$ and $P^\ast (\Psi_j^N)^2 \{ |\Psi_j^N| > \eta \sqrt{n}\} \rightarrow 0$ for every $\eta>0$ under of Assumption \ref{a:abadie} (iv), where $P^\ast$ denotes the outer probability.
Second, observe that $$\Psi_{j,\delta}^N = \{\psi_j^1(y;\cdot)-\psi_j^1(y';\cdot):|y-y'|<\delta\}$$ and $(\Psi_{j,\delta}^N)^2$ are $P$-measureable for every $\delta>0$ under Assumption \ref{a:abadie} (iii)--(iv).
Third, 
\begin{align*}
&\sup_{|y-y'|<\delta_n} P(\psi_j^1(y;W)-\psi_j^1(y';W))^2 
\\
\leq& \delta_n {\underline p}^{-2} (1- {\underline p})^{-2} \Big[ 1 + 2 {\underline p}^{-1} (1- {\underline p})^{-1} \overline q P|\phi_\gamma| + {\underline p}^{-2} (1- {\underline p})^{-2} \overline q^2 P\phi_\gamma^2 \Big] \rightarrow 0
\end{align*}
as $\delta_n \rightarrow 0.$
Fourth, the VC-index of $\Psi_j^N = \{\psi_j^N(y;\cdot) : y \in \mathbb{R}\}$ is 2.
To see this consider set $\{w_1,w_2\} = \{(y_1,d,z,x'),(y_2,d,z,x')\}$ with $y_1< y_2$.
Observe that the subgraph of an element of $\Psi_j^N$ cannot pick out $\{w_2\}$.
Therefore, it follows from \citet[][Theorem 2.11.22]{van1996weak} that
\begin{align*}
&\sqrt{n} \left(\begin{array}{cc} \frac{1}{n}\sum_{i=1}^n \hat b_j^N(y;W_i) - E\left[ b_j^N(y;W) \right] \\ \frac{1}{n}\sum_{i=1}^n \hat b^D(W_i) - E\left[ b^D(W) \right] \end{array}\right) \Rightarrow \left(\begin{array}{c}\zeta_j^N(\cdot) \\ \zeta^D\end{array}\right) \qquad\text{ for } j \in \{0,1\},
\end{align*}
where $(\zeta_j^N(\cdot),\zeta^D)$ follows the zero-mean Gaussian process with covariance function
\begin{align*}
Cov(\zeta_j^N(y),\zeta_j^N(y')) =& E[\psi_j^N(y;W)\psi_j^N(y';W)] - E[\psi_j^N(y;W)] \cdot E[\psi_j^N(y';W)],
\\
Cov(\zeta_j^N(y),\zeta^D) =& E[\psi_j^N(y;W)\psi^D(W)] - E[\psi_j^N(y;W)] \cdot E[\psi^D(W)],
\end{align*}
for $j \in \{0,1\}$, and $Var(\zeta^D) = Var(\psi^D(W))$.

Therefore,
\begin{align*}
\sqrt{n}\left( \widehat \beta _j(\cdot) - \beta_j(\cdot) \right) \Rightarrow \frac{1}{E[b^D(W)]} \zeta_j^N(\cdot) - \frac{E[b_j^N(\cdot;W)]}{E[b^D(W)]^2} \zeta^D,
\end{align*}
$j \in \{0,1\}$,
follows by the functional delta method.
\end{proof}

\vspace{1cm}

\bibliographystyle{ecta}
\bibliography{mybib}
\end{document}